\documentclass[11pt]{article}

\usepackage{enumerate}

\usepackage[utf8]{inputenc}
\usepackage[english]{babel}
\usepackage{graphicx}         
\usepackage{braket}
\usepackage{amsmath}
\usepackage{amssymb}
\usepackage{amsthm}
\usepackage{mathtools}
\usepackage{bbm}		
\usepackage{mathrsfs}
\usepackage{amssymb}
\usepackage{array}		
\usepackage{color}
\usepackage{subeqnarray}	
\usepackage{wrapfig}
\usepackage{url}
\usepackage{hyperref}
\usepackage{physics}
\usepackage{tikz}
\usepackage{verbatim}
\usetikzlibrary{angles,quotes}
\usetikzlibrary {arrows.meta}

\usepackage{bbm}

\usepackage[backend=bibtex,
style=numeric,
sorting=nyt,
doi=false,
isbn=false,
url=false,
eprint=false,
giveninits=true,
maxnames=99
]{biblatex}
\AtEveryBibitem{%
	\clearfield{doi}%
	\clearfield{url}%
	\clearfield{eprint}%
	\clearfield{isbn}%
	\clearfield{month}%
}

\renewbibmacro*{journal+issuetitle}{%
	\usebibmacro{journal}%
	\setunit*{\addspace}%
	\printfield{volume}%
	\iffieldundef{number}
	{}
	{\mkbibparens{\printfield{number}}}%
	\setunit{\addcomma\space}%
	\printfield{eid}%
	\setunit{\addcomma\space}%
	\usebibmacro{issue+date}%
	\newunit}

\allowdisplaybreaks

\numberwithin{equation}{section}

\newtheorem{theorem}{Theorem}[section]
\newtheorem{lemma}{Lemma}[section]

\newtheorem{corollary}{Corollary}[section]

\newtheorem{assumption}{Assumption}[section]
\newenvironment{assumptionA}[1]
{%
	\begingroup
	\renewcommand{\theassumption}{(#1)}%
	\begin{assumption}%
	}
	{%
	\end{assumption}%
	\endgroup
	\addtocounter{assumption}{-1}%
}

\title{On the Emergence of Exponential Decay from Discrete Spectra for Friedrichs Hamiltonians}
\author{
	Tamim Al-Qaiwani\thanks{School of Science, Constructor University Bremen, Campus Ring 1, 28759 Bremen, Germany. \texttt{talqaiwani@constructor.university}}\, and Sören Petrat\thanks{School of Science, Constructor University Bremen, Campus Ring 1, 28759 Bremen, Germany. \texttt{spetrat@constructor.university}}}

\date{\today}

\begin{document}
	\maketitle

	\begin{abstract}
		We study a class of Friedrichs Hamiltonians, that is, operators describing an excited state coupled to a bath through a rank-one perturbation, in the case where the bath Hamiltonian has purely discrete spectrum. We consider sequences of such Hamiltonians for which the spectral measures associated to the coupling functions by the bath Hamiltonians converge weakly to a limiting measure that is absolutely continuous with a Hölder continuous density near the excited energy. Under this assumption, we show that the survival probability of the excited state decays in an approximately exponential manner on suitable time scales and under favourable conditions, with a decay rate given by Fermi’s golden rule for the limiting measure. The error is estimated in terms of the coupling strength and the Lévy distance between the discrete spectral measures and the limiting measure. As an application, we treat a two-level atom coupled to a massless bosonic field confined to a large cavity in the rotating-wave approximation.
	\end{abstract}
\section{Introduction and Main Results}
There are many instances of quantum systems where an initially stable state becomes metastable when a perturbation is switched on. A prototypical example are the excited states of atomic systems. Since they are eigenstates of the atomic Hamiltonian, the atom will remain in any given excited state for an indefinite amount of time once prepared in such a state. However, as soon as a coupling to a photon field is introduced, the excited states (but not the ground state) should become metastable and the probability of the system remaining in a given excited state should decay in an approximately exponential manner on suitable timescales \cite{OkamotoYajima1985, King1994, BachFröhlichSigal,HaslerHerbstHuber2008,BACH2017,}. \\

\noindent
From an abstract mathematical perspective, the setting is the following: One considers a self-adjoint operator $H_0$ on a separable Hilbert space $\mathcal{H}$ describing the unperturbed system and a normalised eigenstate $\chi\in \mathcal{H}$ of $H_0$, describing the initially stable state. For simplicity we shall assume that the corresponding eigenvalue $E_0$ is non-degenerate. The perturbation is typically assumed to be a symmetric $H_0$-bounded operator $W$, so that $H(\mu)=H_0+\mu W$ is self-adjoint on $D(H_0)$ for sufficiently small $\mu\geq 0$. The quantity we shall use as an indicator for metastability is the \textit{survival probability}, defined as
\begin{equation}
	p^\mu(t)=\abs{\left\langle \chi, \exp(-itH(\mu))\chi\right\rangle}^2.
\end{equation}
This is the probability for a positive answer to the following question: ``Prepare the system in the state $\chi$ and consequently evolve for a time $t\geq 0$, where the dynamics is governed by the Hamiltonian $H(\mu)$. Is the system still in the state $\chi$ at time $t$?" In the case where the eigenvalue $E_0$ is surrounded by the absolutely continuous spectrum of $H_0$, there is a large body of literature devoted to proving that the survival probability decays in an approximately exponential manner in time, in the sense that there is a $\mu_0> 0$ and a function $\Gamma:(0,\mu_0)\to(0,\infty)$, such that
\begin{equation}
	\label{Eq:DecayLaw}
	\lim_{\mu\to 0^+}\sup_{t\geq 0}\abs{p^\mu(t)-e^{-2\Gamma(\mu)t}}=0.
\end{equation}
To the best of our knowledge, however, there are no works which address the case where the eigenvalue $E_0$ is surrounded, not by absolutely continuous, but by discrete, albeit ``dense", spectrum. Nor do any of the methods used in the case where the eigenvalue $E_0$ is embedded in the absolutely continuous spectrum of $H_0$ apply, in any immediate way, to this setting. Most notably, these methods include the complex deformation method of Aguilar--Balslev--Combes \cite{AguilarCombes,BalslevCombes,SimonNBodyResonances} and methods bases on limiting absorption principles in combination with Mourre theory \cite{Mourre,JensenMourrePerry,OrthResonances,Cattaneo2005AGR}. The situation of discrete, yet ``dense", spectrum, is, however, often encountered in statistical mechanics, where the particles under investigation are habitually confined to a large but finite volume, say to a torus of side length $L>0$. This leads to a discretisation of the energy of the system, which becomes ``denser" the larger the confinement is.
The present article is concerned with addressing precisely such situations, at least in the, admittedly rather restricted, case of so-called Friedrichs Hamiltonians. These Hamiltonians, named after the seminal work of Friedrichs \cite{Friedrichs} on perturbations of continuous spectra, can be specified by the following data:
\begin{enumerate}
	\item The energy $E_0\in\mathbb{R}$ of the excited state of the ``atom".
	\item A self-adjoint operator $K$ acting on a separable Hilbert space $\mathcal{K}$, describing the ``bath" to which the ``atom" is coupled.
	\item A coupling function (or form factor) $g\in\mathcal{K}$, which we will assume to be normalised, describing the form of the coupling of ``atom" to ``bath".
\end{enumerate}
For $\mu\geq 0$, the Friedrichs Hamiltonian associated to the triple $(E_0,K,\mu g)$ is then defined to be the self-adjoint operator $H(\mu)$ acting on $\mathcal{H}=\mathbb{C}\oplus\mathcal{K}$ as
$$
H(\mu )(\lambda, \psi)=(E_0\lambda + \mu\langle g,\psi\rangle , K\psi + \mu \lambda g),
$$
with domain $D(H(\mu))=\mathbb{C}\oplus D(K)$. Hamiltonians of this form are frequently used as approximations to various system in the physics literature \cite{FriedrichModelsinPhysics}. For aspects concerning the mathematical theory of Friedrichs Hamiltonians, we refer the reader to the lecture notes of Jakšić, Kritchevski and Pillet \cite{Jakšić2006} and the references therein. In particular, the authors present a proof for the exponential decay law (\ref{Eq:DecayLaw}) when $E_0$ is surrounded by the absolutely continuous spectrum of $K$, under certain additional analyticity assumptions, using the method of complex deformations. For questions regarding the definition of Friedrichs Hamiltonians when the coupling function is singular, that is, when $g\notin\mathcal{K}$, we refer the reader to the works of Derezi{\'n}ski and Früboes \cite{DEREZINSKI2002433} and Facci, Ligabò and Lonigro \cite{FacchiLigaboLonigro2019,SingularFriedrich}. In this article, however, we shall only consider regular coupling functions $g\in\mathcal{K}$.\\

\noindent
Let us now discuss a few examples of systems which fall into the setting we wish to address in this article, namely systems described, perhaps in an approximate sense, by Friedrichs Hamiltonians with a discrete, yet ``dense", spectrum.
The guiding example is provided by the confined spin-boson model in the rotating-wave approximation. This model describes a two-level system, which we shall refer to as the atom, coupled to a massless bosonic field confined to a torus of side-length $L>0$. The Hamiltonian describing the atom is 
	\begin{equation}
	H^{\text{Atom}}(E_0)=\begin{pmatrix}
		E_0  & 0\\
		0 & 0
	\end{pmatrix}	
\end{equation}
on $\mathbb{C}^2$, in the basis $|e\rangle =(1,0)$, $|g\rangle =(0,1)$, where $|e\rangle$ describes the excited state of the atom, $|g\rangle$ is the ground state and $E_0>0$ is their energy difference. Letting $\Lambda_L:=2\pi L^{-1} \mathbb{Z}^3$ denote the momentum lattice dual to the torus of side-length $L>0$, the Hamiltonian describing the field energy is defined on the bosonic Fock space $\mathcal{F}_b(\ell^2(\Lambda_L))=\bigoplus_{n=0}^\infty \ell^2_{sym}(\Lambda_L^n)$ as
\begin{equation}
	H^{\text{Field}}_L=\sum_{k\in\Lambda_L}|k|a_k^\dagger a_k,
\end{equation}
with the natural domain, which we abbreviate by $\mathcal{D}_L$.
Here, $a_k^\dagger$ and $a_k$ denote the creation and annihilation operators for a field quantum of momentum $k\in\Lambda_L$, satisfying the canonical commutation relations $[a_k,a_q^\dagger]=\delta_{kq}$ and $[a_k,a_q]=[a_k^\dagger,a_q^\dagger]=0$. It is clear that $H^{\text{Field}}_L$ has purely discrete spectrum given by  $\sigma(H^{\text{Field}}_L)=\{2\pi |k| L^{-1} \text{ : }k\in\mathbb{Z}^3\}$, which becomes ``denser" the larger the confinement becomes. The Hilbert space describing the combined system is $\mathbb{C}^2\otimes \mathcal{F}_b(\ell^2(\Lambda_L))$. For some $g_L\in \ell^2(\Lambda_L)$ with $\norm{g_L}_{\ell^2(\Lambda_L)}=1$ and $\mu\geq 0$, consider the spin-boson Hamiltonian
\begin{equation}
	\label{eq:Hamiltonian}
	H_L^{\text{SB}}(E_0,\mu g_L)=H^{\text{Atom}}(E_0)\otimes I+ I\otimes H^{\text{Field}}_L +\mu \sum_{k\in\Lambda_L}\left(\sigma_{-}\otimes g_L(k )a_k^\dagger + \sigma_+\otimes \overline{g_L(k)}a_k\right),
\end{equation}
where $\sigma_{-}=|g\rangle\langle e|$ and $\sigma_{+}=|e\rangle\langle g|$. Note that the above Hamiltonian is self-adjoint on $\mathbb{C}^2\otimes\mathcal{D}_L$ by the Kato--Rellich theorem, and that it only contains interaction terms which are capable of producing a field particle by relaxation of the excited state of the atom, but not by excitation of the ground state. The neglect of the latter terms is called the rotating-wave approximation. The total number of excitations
\begin{equation}
	\mathcal{N}_{e}=|e\rangle\langle e|\otimes I + I \otimes \sum_{k\in\Lambda_L}a_k^\dagger a_k
\end{equation}
commutes with $H_L^{\text{SB}}(E_0,\mu g_L)$ and is therefore a conserved quantity. Hence, the time evolution of the state describing the excited atom in the absence of field particles, namely $|e\rangle\otimes\Omega$, where $\Omega$ denotes the Fock space vacuum, can be computed using $H_L^{\text{SB}}(E_0,\mu g_L)$ restricted to the one-excitation sector. It is easily verified that this restricted Hamiltonian is unitarily equivalent to the Friedrichs Hamiltonian associated to the triple $(E_0,K_L,\mu g_L)$, where $K_L$ is the self-adjoint operator acting on $\ell^2(\Lambda_L)$ by $(K_Lf)(k)=|k|f(k)$, and that, under this identification, $|e\rangle\otimes \Omega$ is mapped to the state $\chi=(1,0)\in \mathbb{C}\otimes \ell^2(\Lambda_L)$. In the case where the field particles are allowed to propagate along the entire real line, which in particular implies that $E_0$ is surrounded by absolutely continuous spectrum, the survival probability of the state $|e\rangle\otimes \Omega$ was shown to decay exponentially in the rotating-wave approximation by King \cite{KingWW}, without analyticity assumptions. As an application of our main result, we appropriately extend this result to the setting of confined field particles in three dimensions in Section \ref{Sec:WWAtom}. Another example of a similar nature is given by translation invariant Nelson or Polaron type models in finite volume and where the bosonic Fock space is restricted to the vacuum and one-particle sector, such as those considered by G{\'e}rard, M{\o}ller and Rasmussen in \cite{GerardMollerRasmusses2011}. Here, the Friedrichs Hamiltonians arise as fiber Hamiltonians describing the system at a fixed total momentum and the discreteness of the spectrum stems again from the confinement of the field particles. A third example can be found in the work of Derezi{\'n}ski, Li, and Napi{\'o}rkowski \cite{DerezinskiNapiorkowskiLee2024}, who used a Friedrichs Hamiltonian to analyse the damping of quasi-particle excitations in a Bose gas confined to a large volume, called Beliaev damping. This example is of particular interest, since the $N$-body Hamiltonian describing the Bose gas is most naturally formulated in finite volume. After a series of transformations and approximations, which we do not reproduce here, the authors arrive at a Friedrichs Hamiltonian with purely discrete, yet ``dense", spectrum. In order to refer to known results in the setting of absolutely continuous spectrum, the authors then take a formal continuum limit, thereby replacing the discrete spectrum by an absolutely continuous one. We hope that the results in this paper may be used to treat the finite-volume Friedrichs Hamiltonians themselves and show that approximately exponential decay already emerges on suitable timescales before the thermodynamic limit is taken, provided the discrete spectrum is sufficiently dense. As a final example of a somewhat different nature we mention the Bixon--Jortner model \cite{BixonJortner}, which is used in quantum chemistry to describe the radiationless decay of a state into the quasi-continuous vibrational modes of a large molecule. We should remark that, strictly speaking, the Bixon--Jortner model does not fall into the class of Hamiltonians we consider here, since the ``atom" is assumed to be uniformly coupled to an infinitude of states, with energies ranging from negative to positive infinity, so that $g\notin\mathcal{K}$. However, this is an unphysical aspect of the model and our analysis applies to more realistic versions of this model, where the energy levels of the molecule lie in a band of finite width. \\

\noindent
Let us now discuss our assumptions on the model in more detail. In order to express that the spectrum of the Friedrichs Hamiltonians under consideration should be ``almost" continuous, we assume that, instead of one, we are given a sequence of such Hamiltonians $(H_L(\mu))_{L\in\mathbb{N}}$ associated to the triples $(E_0,K_L,\mu g_L)$. Letting $\nu_L$ denote the spectral measure associated to $g_L$ by $K_L$, we impose the following assumption:
\begin{assumptionA}{A$\alpha$}
	\label{Assumption}
	The spectral measures $\nu_L$ converge weakly to a measure $\nu_\infty$ which is absolutely continuous w.r.t.\ the Lebesgue measure on some compact neighbourhood $I$ of $E_0$. Moreover, the corresponding density function $\rho_\infty$ is bounded and Hölder continuous of degree $\alpha\in(0,1]$ on $I$. Finally, we assume that $\rho_\infty(E_0)>0$.
\end{assumptionA}
\noindent
In Lemma \ref{Lemm_Properties}, we verify that this assumption holds for sequences of confined spin-boson Hamiltonians in the rotating wave approximation $H_L^\text{SB}(E_0,\mu g_L)$ when restricted to the one-excitation sector and under certain regularity assumptions on the coupling functions $g_L\in\ell^2(\Lambda_L)$.\\

\noindent
In order to quantify how ``continuum like" the spectral measures $\nu_L$ are, we use the Lévy metric, defined on the space $\mathcal{P}(\mathbb{R})$ of probability measures over $(\mathbb{R},\mathcal{B}(\mathbb{R}))$ by
\begin{equation}
	d(P,Q):=\inf\{\varepsilon>0 : F_P(\lambda-\varepsilon) - \varepsilon\leq F_Q(\lambda)\leq F_P(\lambda+\varepsilon) + \varepsilon\text{ for all }\lambda\in\mathbb{R}\},
\end{equation}
where for a probability measure $P\in\mathcal{P}(\mathbb{R})$, we denote by $F_P(\lambda)=P((-\infty,\lambda])$ the corresponding cumulative distribution function. If $(H_L(\mu))_{L\in\mathbb{N}}$ is a sequence of Friedrichs Hamiltonians satisfying Assumption \ref{Assumption} for some $\alpha\in(0,1]$, then $d(\nu_L,\nu_\infty)\xrightarrow{L\to\infty}0$, since the Lévy metric metrises weak convergence \cite{Elstrodt}. With a slight abuse of notation, we will let $\chi=(1,0)\in\mathcal{H}_L\equiv \mathbb{C}\oplus \mathcal{K}_L$, neglecting the dependence on $L$. Our main result is:
\begin{theorem}
	\label{Theorem:ExpDecay}
	Let $(H_L(\mu))_{L\in\mathbb{N}}$ be a sequence of Friedrichs Hamiltonians associated to triples $(E_0,K_L,\mu g_L)$ satisfying Assumption \ref{Assumption} for some $\alpha\in(0,1]$. Let $\nu_\infty$ denote the weak limit of the spectral measures associated to $g_L$ by $K_L$ as $L$ approaches infinity, and define the decay rate $\Gamma(E_0)$ and the energy shift $\Delta(E_0)$ by
	\begin{align}
		\label{eq:EnergyShift}
		\Delta(E_0)&=\lim_{\delta\to 0^+}\int_{\mathbb{R}\setminus B_\delta(E_0)}\frac{1}{\lambda - E}\dd \nu_\infty(\lambda),\\
		\Gamma(E_0)&=\pi\frac{\dd \nu_\infty}{\dd\lambda}(E_0)\equiv \pi\rho_\infty(E_0)\label{eq:DecayRate}.
	\end{align}
Then, for any $\gamma\in (0,\alpha)$, there is a $C(\gamma)>0$ such that, for all $\mu\geq 0$ and $L\in\mathbb{N}$,
	\begin{equation}
		\left\langle \chi,\exp(-itH_L(\mu))\chi\right\rangle = e^{-i(E_0-\mu^2 \Delta(E_0))t}e^{-\mu^2\Gamma(E_0)t}+R^\mu_L(t),\quad t\geq 0,
	\end{equation}
	where the error term $R^{\mu}_L(t)$ satisfies the estimate
	\begin{equation}
		\abs{R^\mu_L(t)}\leq C(\gamma)\left(\mu^{2\gamma} + \left(\frac{d(\nu_L,\nu_\infty)}{\mu^{2}}\right)^{\alpha} \right)\exp(\frac{d(\nu_L,\nu_\infty)}{\mu^{2\gamma}}t).
	\end{equation}
\end{theorem}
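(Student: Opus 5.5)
We start from the standard resolvent representation for Friedrichs Hamiltonians. By the Schur complement (Feshbach) formula, for $\operatorname{Im} z\neq 0$,
\begin{equation*}
\langle \chi,(H_L(\mu)-z)^{-1}\chi\rangle = \frac{1}{E_0-z-\mu^2 F_L(z)},\qquad F_L(z)=\int\frac{\dd\nu_L(\lambda)}{\lambda-z}.
\end{equation*}
The function $F_L$ is the Borel transform of $\nu_L$. By Stone's formula, or equivalently by writing the time evolution as the inverse Laplace transform along a horizontal contour $\operatorname{Im} z=\eta>0$,
\begin{equation*}
\langle\chi,e^{-itH_L(\mu)}\chi\rangle=\frac{-1}{2\pi i}\int_{\mathbb{R}+i\eta} e^{-itz}\frac{\dd z}{E_0-z-\mu^2F_L(z)}.
\end{equation*}
Here $\eta$ is an arbitrary positive number, since the integrand is analytic in the upper half plane. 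The factor $e^{-itz}$ then contributes $e^{\eta t}$. This is the source of the growth factor in the error bound: I plan to choose $\eta=d(\nu_L,\nu_\infty)\mu^{-2\gamma}$, which is comparable to $d_L/\mu^{2\gamma}$ up to constants. The strategy is to compare with the limiting model. Define $F_\infty$ for $\nu_\infty$. Its boundary value satisfies $F_\infty(E+i0)=\Delta(E)+i\pi\rho_\infty(E)$, which is Hölder of any order $\gamma<\alpha$ near $E_0$ by Privalov/Plemelj--Privalov type estimates for Cauchy transforms of Hölder densities. The main term $e^{-i(E_0-\mu^2\Delta)t}e^{-\mu^2\Gamma t}$ comes from the pole of $(E_0-z-\mu^2 F_\infty(E_0+i0))^{-1}$. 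Note the sign convention: $\Delta$ as defined appears with the sign needed so that this pole sits at $E_0-\mu^2\Delta-i\mu^2\Gamma$.

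The key steps are as follows.

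Step 1 is a comparison estimate. For $\operatorname{Im}z=\eta$, I bound $|F_L(z)-F_\infty(z)|$ in terms of the Lévy distance. Integrating by parts, $F_L(z)-F_\infty(z)=\int (F_{\nu_L}-F_{\nu_\infty})(\lambda)(\lambda-z)^{-2}\dd\lambda$. The Lévy closeness gives $|F_{\nu_L}(\lambda)-F_{\nu_\infty}(\lambda)|\le d_L+\nu_\infty$-mass of an interval of length $d_L$ near $\lambda$. Near $I$ this is $\lesssim d_L$, since $\rho_\infty$ is bounded. Far from $I$ one instead uses the shift form directly, combined with the modulus of $(\lambda-z)^{-1}$, i.e.\ $|(\lambda\pm d-z)^{-1}-(\lambda-z)^{-1}|\lesssim d/\eta^2$. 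Altogether I expect $|F_L-F_\infty|\lesssim d_L/\eta$, or at worst $d_L/\eta^2$ off $I$. With $\eta$ chosen as above, $\mu^2|F_L-F_\infty|$ is then small relative to $\mu^2$ times the Hölder scale.

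Step 2 handles the limiting integral on the contour $\mathbb{R}+i\eta$ with $\eta\ll\mu^2$. I approximate $F_\infty(z)$ by $F_\infty(E_0+i0)$ with error $\lesssim |z-E_0|^\gamma$ in a neighbourhood of size $\mu^{2}$. Then I compute the model integral by residues, which yields the exponential main term. The remainder is bounded by $\int |z-E_0|^\gamma\mu^2/|E_0-z-\mu^2 F|^2$. After rescaling $z-E_0=\mu^2 w$, this becomes $\mu^{2\gamma}$ times a convergent integral, because $\operatorname{Im}F_\infty\approx\pi\rho_\infty(E_0)>0$ keeps the denominator away from zero. Outside the neighbourhood of $E_0$, the integrand decays like $1/|z-E_0|^2$ after subtracting $\int e^{-itz}(E_0-z)^{-1}$, which is exactly computable, and contributes $O(\mu^2)$.

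Step 3 treats the difference of resolvent denominators between $L$ and $\infty$. It is bounded by $\mu^2|F_L-F_\infty|/(|D_L||D_\infty|)$. The same rescaling, combined with the Step 1 bound, produces the term $(d_L/\mu^2)^\alpha$. Since $d_L$ only enters through Lipschitz/Hölder differences, I plan to interpolate between the trivial bound and the $d_L/\eta$ bound using Hölder continuity of $\rho_\infty$.

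I expect Step 1 to be the main obstacle. It requires quantitatively controlling Cauchy transforms of purely atomic measures near the real axis, uniformly in $z$ at height $\eta$, and at the same time keeping the lower bound on $|D_L(z)|$ so that $|E_0-z-\mu^2F_L(z)|\gtrsim\mu^2$ on the rescaled region. Balancing $\eta$ against $d_L$ and $\mu^{2\gamma}$ is what forces the stated form of the error.
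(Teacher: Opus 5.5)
\textbf{Gap: the part of the contour away from $E_0$.} Your architecture essentially matches the paper's. The line $\operatorname{Im}z=\eta$ with $\eta=d(\nu_L,\nu_\infty)\mu^{-2\gamma}$ is exactly the paper's smearing scale $\varepsilon$. The factor $e^{\eta t}$ arises the same way, and your Steps 1--2 near $E_0$ correspond to Lemmas \ref{Lemma1} and \ref{Lemma3}. (The term $(d/\mu^2)^\alpha$ does not come from an interpolation in Step 3. It comes from the boundary-value error $\eta^\alpha$ in Lemma \ref{Lemma3} together with $\eta\le d/\mu^2$, and from the Lorentzian tail $\eta/|J|$.)

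Assumption \ref{Assumption} says nothing about $\nu_L$ or $\nu_\infty$ outside $I$; both may be singular there. So neither $F_L$ nor $F_\infty$ is bounded on $\mathbb{R}+i\eta$ uniformly as $\eta\to0$, and your claimed $|z-E_0|^{-2}$ decay with an $O(\mu^2)$ contribution is unfounded. Your own off-$I$ bound does not close either: $\mu^2 d/\eta^2=\mu^{2+4\gamma}/d\to\infty$ as $L\to\infty$.

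In fact no estimate of $\int|R_L-R_{\mathrm{model}}|\,\dd E$ along the whole line can be uniform in $L$. Suppose, say, all $\nu_L$ and $\nu_\infty$ share an isolated atom at some $\lambda_1\notin I$, which is allowed. Then $H_L(\mu)$ has an isolated eigenvalue $\lambda'$ near $\lambda_1$ carrying $\chi$-weight $w$ of order $\mu^2$. The real part $w(\lambda'-E)/((\lambda'-E)^2+\eta^2)$ of the resolvent then contributes of order $w\log(1/\eta)$ to that $L^1$ norm, while $R_{\mathrm{model}}$ stays bounded there. At fixed $\mu$ this diverges as $L\to\infty$, whereas the theorem's bound stays $O(\mu^{2\gamma})$.

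\textbf{Missing idea: discard the real part and use positivity.} For $t>0$ one has $\frac{1}{2\pi i}\int_{\mathbb{R}}e^{-itE}R_L(E+i\eta)\,\dd E=\frac{1}{\pi}\int_{\mathbb{R}}e^{-itE}\operatorname{Im}R_L(E+i\eta)\,\dd E$. The function $\rho:=\pi^{-1}\operatorname{Im}R_L(\cdot+i\eta)=p_\eta*\sigma^\mu_L$ is nonnegative with unit mass. So is the Cauchy density $\tilde\rho$ of width $\eta+\mu^2\Gamma(E_0)$ centred at $E_0-\mu^2\Delta(E_0)$, and this yields \eqref{eq:3}.

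The paper then compares $\rho$ and $\tilde\rho$ only on a window $J\subset I$ of width $\mu^{2-2\gamma}$. There it uses Lemmas \ref{Lemma1} and \ref{Lemma3} and the lower bound $|G|\ge|\tilde G|/2$. The complement is controlled by mass conservation, $\int_{\mathbb{R}\setminus J}\rho=\int_{\mathbb{R}\setminus J}\tilde\rho-\int_J(\rho-\tilde\rho)$, with $\int_{\mathbb{R}\setminus J}\tilde\rho\lesssim(\eta+\mu^2)/|J|\lesssim d/\mu^2+\mu^{2\gamma}$. This argument needs no information on $\Sigma_L$ outside $J$. That is exactly the control your far-field Steps 2--3 would require and cannot obtain.
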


\noindent
\textit{Remarks:}
\begin{enumerate}
	\item In the special case of a Friedrichs Hamiltonian $H(\mu)$ associated to a triple $(E_0,K,\mu g)$ where the spectral measure $\nu$ associated to $g$ by $K$ already has a Hölder continuous density of degree $\alpha\in(0,1]$ close to $E_0$, the above theorem recovers the uniform time estimate reported, for example, in \cite{JensenNenciuOddDim} by Jensen and Nenciu:
\begin{equation*}
	\sup_{t\geq 0}\abs{\left\langle \chi,\exp(-itH(\mu))\chi\right\rangle - e^{-i(E_0-\mu^2 \Delta(E_0))t}e^{-\mu^2\Gamma(E_0)t}}\leq C(\gamma)\mu^{2\gamma},\quad \gamma\in(0,\alpha).
\end{equation*}
\item The quantity $\mu^2/d(\nu_L,\nu_\infty)$ appearing in the error term can be roughly thought of as comparing two spectral length scales: The width of ``discrete resonance", which is of the order $\mu^2$, and the length scale determined by the discreteness of the measure $\nu_L$, which is quantified by $d(\nu_L,\nu_\infty)$. The reciprocal of this ratio being small can be used to give meaning to the phrase that the spectrum of $K$ around $E_0$ should be sufficiently ``dense".
\item In contrast to results obtained in the continuum setting, the error term in Theorem \ref{Theorem:ExpDecay} is not uniformly small in time, but grows exponentially. A growing error is necessary in the discrete setting due to the possibility of Poincaré recurrences, where the survival probability returns arbitrarily close to one after a sufficiently long, but finite, duration of time \cite{BocchieriLoinger}.
\item Under the assumptions of Theorem \ref{Theorem:ExpDecay}, the energy shift and decay rate can be computed using Fermi's golden rule for the limiting measure:
\begin{equation*}
	\Delta(E_0)+i\Gamma(E_0)=\lim_{\varepsilon\to 0^+}\lim_{L\to\infty}\langle g_L,(K_L-E_0-i\varepsilon)^{-1}g_L\rangle=\lim_{\varepsilon\to 0^+}\int_{\mathbb{R}}\frac{1}{\lambda-E_0-i\varepsilon}\dd\nu_\infty(\lambda).
\end{equation*}
This is proven in Lemma \ref{Lemma3}.
\end{enumerate}
The approximate exponential decay of the survival amplitude is often also expressed in terms of a weak coupling (or van-Hove, after \cite{VANHOVE1954517}) limit, where the limit of weak coupling $\mu\to 0^+$ is combined with a limit of  large times $t\to\infty$ in such a way that the quantity $\tau:=\mu^2 t$ is kept fixed.  A possible formulation in the discrete setting is the following immediate corollary, where the limit of weak coupling $\mu\to 0^+$ is additionally combined with a continuum limit $L\to\infty$.
\begin{corollary}
	\label{Corallary_SimWeakContLimit}
	Let $(H_L(\mu))_{L\in\mathbb{N}}$ be a sequence of Friedrichs Hamiltonians associated to triples $(E_0,K_L,\mu g_L)$ satisfying Assumption \ref{Assumption} for some $\alpha\in(0,1]$. Let $\beta\in(0,1/2)$ be arbitrary and put $\mu_L=d(\nu_L,\nu_\infty)^\beta$. Fix a $\tau\geq 0$. Then the simultaneous weak-coupling and continuum limit
	\begin{equation}
		a_{wcc}(\tau):=\lim_{L\to\infty}\left\langle \chi, \exp(-i\tau\mu_L^{-2}(H_L(\mu_L)-E_0))\chi\right\rangle,
	\end{equation}
	exists and is given by
	\begin{equation}
		a_{wcc}(\tau)=e^{i\Delta(E_0)\tau}e^{-\Gamma(E_0)\tau},
	\end{equation}
with $\Delta(E_0)$ and $\Gamma(E_0)$ defined in \eqref{eq:EnergyShift} and \eqref{eq:DecayRate}. Moreover, for any $0<\alpha^-<\alpha$, there is a $C_{\tau,\alpha^-}>0$ such that $$\abs{a_{wcc}(\tau)-\left\langle \chi, \exp(-i\tau\mu_L^{-2}(H_L(\mu_L)-E_0))\chi\right\rangle}\leq C_{\tau,\alpha^-} d(\nu_L,\nu_\infty)^{\min\{2\beta\alpha^-,(1-2\beta)\alpha\}}.$$
\end{corollary}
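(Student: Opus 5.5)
We derive this directly from Theorem \ref{Theorem:ExpDecay}, applied with $\mu=\mu_L$ and $t=\tau\mu_L^{-2}$. Write $d_L:=d(\nu_L,\nu_\infty)$. Throughout we may assume $d_L>0$. If $d_L=0$, then $\nu_L=\nu_\infty$, and we apply the uniform estimate of Remark 1 with $\mu$ fixed; we return to this point at the end. Multiplying the statement of the theorem by the phase $e^{iE_0 t}$, which has modulus one, gives
\begin{equation*}
\left\langle \chi, e^{-i\tau\mu_L^{-2}(H_L(\mu_L)-E_0)}\chi\right\rangle = e^{i\Delta(E_0)\tau}e^{-\Gamma(E_0)\tau} + e^{iE_0\tau\mu_L^{-2}}R^{\mu_L}_L(\tau\mu_L^{-2}).
\end{equation*}
This holds because $\mu_L^2 t=\tau$. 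The main term does not depend on $L$. It therefore suffices to show that $|R^{\mu_L}_L(\tau\mu_L^{-2})|\to0$ at the stated rate. Existence of the limit and the identification of $a_{wcc}(\tau)$ then follow at once.

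Fix $\alpha^-\in(0,\alpha)$ and take $\gamma=\alpha^-$ in Theorem \ref{Theorem:ExpDecay}. The prefactor of the error bound becomes
\begin{equation*}
\mu_L^{2\gamma}+\Big(\frac{d_L}{\mu_L^2}\Big)^{\alpha} = d_L^{2\beta\alpha^-}+d_L^{(1-2\beta)\alpha}\le 2\, d_L^{\min\{2\beta\alpha^-,(1-2\beta)\alpha\}}.
\end{equation*}
The inequality holds for $d_L\le 1$. This is automatic for large $L$, and in any case the Lévy distance between probability measures is at most $1$. Both exponents are positive because $\beta\in(0,1/2)$.

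The only point needing care is the exponential factor. Its exponent is
\begin{equation*}
\frac{d_L}{\mu_L^{2\gamma}}\,t=\tau\, d_L\,\mu_L^{-2\gamma-2}=\tau\, d_L^{\,1-2\beta(1+\gamma)}.
\end{equation*}
Since $\gamma<\alpha\le1$, we have $2\beta(1+\gamma)<4\beta$, which is at most $1$ only when $\beta\le 1/4$. So the exponent is not bounded for free, and this is the main obstacle. The remedy is to use the freedom in $\gamma$: shrinking $\gamma$ makes the exponent bounded. For a given $\alpha^-$, the theorem applies with any $\gamma'\le\alpha^-$, in particular with $\gamma'=\min\{\alpha^-,\,(1-2\beta)/(2\beta)\}$, which is positive. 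With this choice $1-2\beta(1+\gamma')\ge0$, so the exponent is at most $\tau$ and the exponential factor is at most $e^{\tau}$. The first prefactor term then becomes $d_L^{2\beta\gamma'}$.

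When $\gamma'<\alpha^-$, the first rate is $2\beta\gamma'=1-2\beta$. One checks that $1-2\beta\ge(1-2\beta)\alpha$, so $\min\{2\beta\gamma',(1-2\beta)\alpha\}=(1-2\beta)\alpha$. Moreover $\gamma'<\alpha^-$ means $(1-2\beta)/(2\beta)<\alpha^-$, i.e. $1-2\beta<2\beta\alpha^-$. Multiplying by $\alpha\le1$ gives $(1-2\beta)\alpha<2\beta\alpha^-$. Hence $\min\{2\beta\alpha^-,(1-2\beta)\alpha\}=(1-2\beta)\alpha$ as well, and the claimed rate is retained with $C_{\tau,\alpha^-}=2C(\gamma')e^{\tau}$.

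This also shows $|R|\to0$, so the limit exists and equals $e^{i\Delta(E_0)\tau}e^{-\Gamma(E_0)\tau}$. In the degenerate case $d_L=0$, the bound gives an error of order $\mu^{2\gamma}$ for every $\mu$ without exponential growth, so the amplitude coincides with the limit. Letting $\mu\to0$ in Remark 1 confirms this, and the estimate holds trivially.
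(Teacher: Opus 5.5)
Your proof is correct and follows the paper's route: the paper calls the corollary an immediate consequence of Theorem \ref{Theorem:ExpDecay} with $\mu=\mu_L$ and $t=\tau\mu_L^{-2}$, and writes out no further argument. Your observation is exactly the detail that the word ``immediate'' glosses over: the exponential factor $\exp(\tau d_L^{1-2\beta(1+\gamma)})$ is bounded only if $\gamma\le(1-2\beta)/(2\beta)$, which fails for $\gamma=\alpha^-$ when $\beta>1/(2(1+\alpha^-))$, and your choice $\gamma'=\min\{\alpha^-,(1-2\beta)/(2\beta)\}$ fixes this while keeping the stated exponent, since in that regime the minimum is $(1-2\beta)\alpha$ anyway (your aside on $d_L=0$ is unnecessary, as $\mu_L^{-2}$ is then undefined and the statement tacitly assumes $d_L>0$).
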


\noindent
In Theorem \ref{Theorem:ExpDecaySB} we apply the above corollary to the confined spin-boson model in the rotating-wave approximation. Moreover, the condition $\beta<1/2$ is essentially sharp. Indeed, in Theorem \ref{Theorem:ExpDecaySB} we further prove that any $\beta>1/2$ does not lead to an exponential decay of the survival probability of the state $|e\rangle\otimes\Omega$.\\

\noindent
We will now briefly outline the proof of Theorem \ref{Theorem:ExpDecay}. The starting point is the spectral representation of the survival amplitude
\begin{align}
	\langle\chi,\exp(-itH_L(\mu)t)\chi\rangle = \int_{-\infty}^\infty e^{-it\lambda}\dd\sigma^\mu_L(\lambda),
\end{align}
where $\sigma^\mu_L$ is the spectral measure associated to $\chi=(1,0)$ by $H_L(\mu)$. 
The measure $\sigma^0_L$ is simply a Dirac measure situated at $E_0$. If the perturbation is now increased to values significantly larger than the scale determined by the discreteness of $\nu_L$, one can picture the eigenvalues of $H_L(\mu)$ as mixing heavily with one-another. In the process, the initial mass of the Dirac peak will be spread to a large number of neighbouring eigenvalues. Thus, the form of the measure $\sigma^\mu_L$ is rather complicated and regular Kato--Rellich perturbation theory cannot be applied to obtain useful information. Instead of working with the measure $\sigma^\mu_L$ directly, we will use a smeared version, defined by the density
\begin{align}
	\label{eq:rho}
	\rho^\mu_L(\varepsilon;E):=(p_\varepsilon * \sigma^\mu_L)(E),
\end{align}
where $p_\varepsilon(\lambda):=\pi^{-1}\varepsilon (\lambda^2+\varepsilon^2)^{-1}$ denotes the Cauchy kernel of scale $\varepsilon>0$. Clearly, we have that $\rho_L^\mu(\varepsilon;\cdot)\in L^1(\mathbb{R})$ with $\norm{\rho_L^\mu(\varepsilon;\cdot)}_{L^1(\mathbb{R})}=1$. The scale $\varepsilon>0$ must be chosen large enough as to effectively smear out the discrete structure of the measure $\sigma_L^\mu$, while being sufficiently small as to still accurately resolve the structure of interest, which is a ``discrete resonance" of width $\mu^2$, i.e., $\varepsilon$ must be chosen such that $d(\nu_L,\nu_\infty)\ll \varepsilon\ll \mu^2$. By the Fourier convolution theorem, we have that
\begin{align}
	\label{eq:SmearFourier}
	\int_{-\infty}^\infty \rho_L^\mu(\varepsilon;\lambda)e^{-it\lambda}\dd\lambda = \widehat{p_\varepsilon}(t) \langle\chi,\exp(-itH_L(\mu)t)\chi\rangle,
\end{align}
where $\widehat{p_\varepsilon}(t)=e^{-\varepsilon |t|}$ denotes the Fourier transform of $p_\varepsilon$. In order to proceed, we then show that the density $\rho_L^\mu(\varepsilon;\cdot)$ is well approximated in the $L^1(\mathbb{R})$-norm by the density
\begin{align}
	\label{eq:rhotilde}
	\tilde{\rho}_L^\mu(\varepsilon;E)=\frac{1}{\pi}\frac{(\varepsilon + \mu^2\Gamma(E_0))}{(E-E_0+\mu^2\Delta(E_0))^2+(\varepsilon + \mu^2\Gamma(E_0))^2},
\end{align}
where $\Delta(E_0)$ and $\Gamma(E_0)$ are defined by (\ref{eq:EnergyShift}) and (\ref{eq:DecayRate}). Since for any $t\geq 0$, the Fourier transform of $\tilde{\rho}_L^\mu(\varepsilon;\cdot)$ is given by $\widehat{\tilde{\rho}_L^\mu(\varepsilon;\cdot)}(t)=e^{-i(E_0-\mu^2\Delta(E_0))t}e^{-\mu^2\Gamma(E_0)t}e^{-\varepsilon t}$, Equation (\ref{eq:SmearFourier}) then yields that, for any $t\geq 0$,
\begin{align}
	\label{eq:3}
	\abs{\langle\chi,\exp(-itH_L(\mu)t)\chi\rangle-e^{-i(E_0-\mu^2\Delta(E_0))t}e^{-\mu^2\Gamma(E_0)t}}\leq \norm{\rho_L^\mu(\varepsilon;\cdot)-\tilde{\rho}_L^\mu(\varepsilon;\cdot)}_{L^1(\mathbb{R})}e^{\varepsilon t}.
\end{align}
The remaining difficulty, in order to conclude the proof of Theorem \ref{Theorem:ExpDecay}, is to estimate the $L^1(\mathbb{R})$-norm appearing above after appropriately choosing the scale $\varepsilon>0$.
\\

\noindent
The rest of this paper is structured as follows: In Section \ref{Sec:SelfEnergyReg} we prove two regularity results on the self-energies of the model. In Section \ref{Sec:ExpDecay}, these results are then used to prove Theorem \ref{Theorem:ExpDecay}. Section \ref{Sec:WWAtom} is devoted to applying Corollary \ref{Corallary_SimWeakContLimit} to a model describing spontaneous emission of atoms confined to large cavities in the rotating-wave approximation.
\section{Regularity of the Self-Energy Close to the Real-Axis}
\label{Sec:SelfEnergyReg}
Let $(H_L(\mu))_{L\in\mathbb{N}}$ be a sequence of Friedrichs Hamiltonians associated to triples $(E_0,K_L,\mu g_L)$. A function of great use in the study of Friedrichs Hamiltonians is the so-called \textit{self-energy} $\Sigma_L(z)$, defined by
$$
\Sigma_L(z):=\langle g_L,(K_L-z)^{-1}g_L\rangle=\int_{\mathbb{R}}\frac{1}{\lambda-z}\dd\nu_L(\lambda),\quad z\in\mathbb{C}\setminus\mathbb{R}.
$$
For $\varepsilon>0$ and $E\in\mathbb{R}$, we also define the self-energy at fixed distance from the real-axis by $\Sigma_L(\varepsilon;E)=\Sigma_L(E+i\varepsilon)$.  The purpose of this section is to establish certain regularity results on the self-energies $\Sigma_L(\varepsilon;\cdot)$ for small values of $\varepsilon>0$. Although the self-energies $\Sigma_L(\varepsilon;\cdot)$ are smooth for all $\varepsilon>0$, they may become sharply peaked close to the pure-points of $\nu_L$, for small values of $\varepsilon>0$. This becomes most clear when considering the imaginary part of the self-energies
$$
\Im\Sigma_L(\varepsilon;E)=\int_{-\infty}^\infty\frac{\varepsilon}{(\lambda-E)^2+\varepsilon^2}\dd\nu_L(\lambda).
$$
The above equation expresses that $\Im\Sigma_L(\varepsilon;\cdot)$ is just a convolution of the spectral measure with a Cauchy kernel, which can be viewed as a smoothened version of the measure, where its features at scale $\varepsilon$ are smeared out. In particular, if $\lambda_0$ is a pure-point of the measure $\nu_L$, then $\Im\Sigma_L(\varepsilon;\lambda_0)\geq \nu_L(\{\lambda_0\}) \varepsilon^{-1}$. The ``finer" the measures $\nu_L$ become in a neighbourhood of $E_0$, the less mass any given pure-point of $\nu_L$ can carry, leading to a suppression of the peaks:
\begin{lemma}
	\label{Lemma1}
	Let $(H_L(\mu))_{L\in\mathbb{N}}$ be a sequence of Friedrichs Hamiltonians associated to triples $(E_0,K_L,\mu g_L)$ satisfying Assumption \ref{Assumption} for some $\alpha\in(0,1]$. Then there is a compact neighbourhood $I$ of $E_0$ such that for any $\gamma\in(0,\alpha)$, there is a constant $C(\gamma)>0$ such that
	\begin{equation}
		\label{eq:Lemma1}
	\abs{\Sigma_L(\varepsilon;E_1)-\Sigma_L(\varepsilon;E_2)}\leq C(\gamma)\left(\abs{E_1-E_2}^\gamma + \frac{d(\nu_L,\nu_\infty)}{\varepsilon}\right),
	\end{equation}
for all $0<\varepsilon<1$, $L\in\mathbb{N}$ and $E_1,E_2\in I$.
\end{lemma}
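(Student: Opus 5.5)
We split $\Sigma_L(\varepsilon;E)=\Sigma_\infty(\varepsilon;E)+\big(\Sigma_L(\varepsilon;E)-\Sigma_\infty(\varepsilon;E)\big)$, where $\Sigma_\infty(z)=\int(\lambda-z)^{-1}\dd\nu_\infty(\lambda)$, and estimate the two pieces separately. The first piece supplies the Hölder term $\abs{E_1-E_2}^\gamma$, uniformly in $\varepsilon\in(0,1)$. The second piece is bounded by $C\,d(\nu_L,\nu_\infty)/\varepsilon$ uniformly in $E$, and applying the triangle inequality twice gives \eqref{eq:Lemma1}.

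For the continuum part, take $I$ to be a compact neighbourhood of $E_0$ strictly inside the neighbourhood $I'$ of Assumption \ref{Assumption}, with $\operatorname{dist}(I,\mathbb{R}\setminus I')=\delta_0>0$. Split $\nu_\infty=\rho_\infty\mathbbm{1}_{I'}\dd\lambda+\nu_\infty|_{\mathbb{R}\setminus I'}$.

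The far part gives a function of $E\in I$ that is smooth uniformly in $\varepsilon$, since $\abs{\partial_E(\lambda-E-i\varepsilon)^{-1}}\le\delta_0^{-2}$. It is therefore Lipschitz with constant $\delta_0^{-2}$ and a fortiori $\gamma$-Hölder on the bounded set $I$.

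For the near part, replace $\rho_\infty\mathbbm 1_{I'}$ by a compactly supported extension $f$ that is $\alpha$-Hölder on all of $\mathbb{R}$, e.g.\ using a cutoff; the leftover difference is again supported at distance $\geq\delta_0/2$ from $I$ and so is harmless. The Cauchy transform $\int f(\lambda)(\lambda-E-i\varepsilon)^{-1}\dd\lambda$ equals $\pi(p_\varepsilon*f)+i\pi(q_\varepsilon*f)$ up to sign conventions, where $q_\varepsilon$ is the conjugate Poisson kernel. Since $p_\varepsilon$ has mass one, $p_\varepsilon*f$ is $\alpha$-Hölder with the same constant. For the conjugate part, $q_\varepsilon*f=\widetilde{f}*p_\varepsilon$, where $\widetilde f$ is the Hilbert transform. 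By Privalov's theorem, $\widetilde f$ is $\gamma$-Hölder for every $\gamma<\alpha$ (the log loss appears at $\alpha=1$; for $\alpha<1$ one gets $\alpha$ itself, and $\gamma<\alpha$ suffices in either case). If one prefers to avoid citing Privalov, do the standard direct estimate: set $h=\abs{E_1-E_2}$, and split the integral of $f(\lambda)-f(E)$ against the kernel difference into $\abs{\lambda-E_1}<2h$ and its complement. The near region is bounded by $\int_{|s|<3h}\abs{s}^{\alpha-1}\dd s\lesssim h^\alpha$. The far region is bounded by $h\int_{|s|>2h}\abs{s}^{\alpha-2}\dd s\lesssim h^\alpha$, using $\abs{\partial_E(\lambda-E-i\varepsilon)^{-1}}\le\abs{\lambda-E}^{-2}$. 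In both regions the bounds are uniform in $\varepsilon$, with the usual care for the constant term $f(E)\int_{\text{supp}}(\ldots)$, which is Lipschitz in $E$ uniformly in $\varepsilon$.

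For the discrete comparison, note that for fixed $z=E+i\varepsilon$ the function $\phi(\lambda)=(\lambda-z)^{-1}$ satisfies $\norm{\phi}_\infty\le\varepsilon^{-1}$ and has total variation $\int\abs{\phi'}\le\int\dd\lambda/((\lambda-E)^2+\varepsilon^2)=\pi/\varepsilon$. Integrating by parts against the difference of distribution functions gives
$$\Big|\int\phi\,\dd(\nu_L-\nu_\infty)\Big|=\Big|\int\phi'(\lambda)\big(F_{\nu_L}(\lambda)-F_{\nu_\infty}(\lambda)\big)\dd\lambda\Big|.$$
By the definition of the Lévy distance with $\eta=d(\nu_L,\nu_\infty)$, $F_{\nu_L}(\lambda)\le F_{\nu_\infty}(\lambda+\eta)+\eta$ and $F_{\nu_L}(\lambda)\ge F_{\nu_\infty}(\lambda-\eta)-\eta$. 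So $\abs{F_{\nu_L}-F_{\nu_\infty}}(\lambda)\le\eta+\nu_\infty([\lambda-\eta,\lambda+\eta])$. The $\eta$ term contributes $\pi\eta/\varepsilon$. For the second term, swap the order of integration:
$$\int\abs{\phi'(\lambda)}\,\nu_\infty([\lambda-\eta,\lambda+\eta])\dd\lambda=\int\Big(\int_{x-\eta}^{x+\eta}\abs{\phi'}\Big)\dd\nu_\infty(x)\le 2\eta\norm{\phi'}_\infty\le 2\eta\varepsilon^{-2}.$$
This is too weak. Instead use $\int_{x-\eta}^{x+\eta}\abs{\phi'}\le\pi/\varepsilon$ together with the bound $\nu_\infty([x-\eta,x+\eta])$, or better, use the $L^\infty$ density of $\nu_\infty$ near $E$ and a tail bound away from $E$. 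Near $E$, $\nu_\infty$ has bounded density on $I'$, so $\nu_\infty([\lambda-\eta,\lambda+\eta])\le 2\norm{\rho_\infty}_\infty\eta$ for $\lambda$ within $\delta_0/2$ of $I$, giving a contribution $\lesssim\eta/\varepsilon$. Away from $E$, $\abs{\phi'}\le 4\delta_0^{-2}$ is bounded and integrable against the swapped form: $\int\big(\int_{x-\eta}^{x+\eta}\abs{\phi'}\big)\dd\nu_\infty(x)$ over the far region is $\lesssim\eta\,\delta_0^{-2}\le\eta/\varepsilon$ for $\varepsilon<1$. This gives $\abs{\Sigma_L(\varepsilon;E)-\Sigma_\infty(\varepsilon;E)}\le C\eta/\varepsilon$ uniformly for $E\in I$.

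The main obstacle is this last step. The naive estimate gives $\eta/\varepsilon^2$, and obtaining $\eta/\varepsilon$ requires exploiting the bounded density of $\nu_\infty$ near $E$ as above. The boundary terms in the integration by parts vanish because both distribution functions tend to $0$ and $1$ at $\mp\infty$ and $\phi\to0$.
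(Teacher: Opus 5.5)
Your proposal is correct, and its architecture matches the paper's. Both go through the triangle inequality via $\Sigma_\infty$. For $\Sigma_L-\Sigma_\infty$ your argument is the paper's own: integration by parts, the pointwise bound $|F_{\nu_L}-F_{\nu_\infty}|(\lambda)\le d+\nu_\infty([\lambda-d,\lambda+d])$, bounded density near $E$ (the paper's \eqref{eq:LevyMetricNeighbourhood}), and Fubini with a bounded derivative away from $E$.

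The genuine difference is the uniform-in-$\varepsilon$ H\"older bound for $\Sigma_\infty(\varepsilon;\cdot)$. The paper does this by hand. It isolates the piece over the moving symmetric window $B_\eta(E)$ and shifts it to $B_\eta(0)$, so that oddness of $\lambda/(\lambda^2+\varepsilon^2)$ removes the constant term automatically. It then bounds the integrand by $\min\{|E_1-E_2|^\alpha,|\lambda|^\alpha\}/|\lambda|$. This is completely elementary and self-contained, but it loses a factor $|\ln|E_1-E_2||$ for every $\alpha$, which is why only $\gamma<\alpha$ comes out.

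You instead localise with a cutoff to a globally $\alpha$-H\"older, compactly supported $f$ and write the conjugate part as $p_\varepsilon*\widetilde f$. You then invoke Privalov, or the direct estimate that uses the kernel difference $\lesssim h|\lambda-E_1|^{-2}$ against $|f(\lambda)-f(E_1)|\lesssim|\lambda-E_1|^\alpha$ in the far region. This costs an external theorem plus bookkeeping for the cutoff remainder and the constant term. In return it gives the endpoint exponent $\gamma=\alpha$ whenever $\alpha<1$, so the lemma itself holds with $\gamma=\alpha$ in that case, a slight sharpening.

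Two small repairs are needed in a write-up:
\begin{enumerate}
\item The step $\nu_\infty([\lambda-d,\lambda+d])\le 2\|\rho_\infty\|_\infty d$ requires $d(\nu_L,\nu_\infty)\le\delta_0/2$, so that the window stays inside the neighbourhood where the density is bounded. Otherwise use the trivial bound $|\Sigma_L-\Sigma_\infty|\le 2/\varepsilon\le (4/\delta_0)\,d(\nu_L,\nu_\infty)/\varepsilon$. The paper's ``without loss of generality'' for \eqref{eq:LevyMetricNeighbourhood} hides the same case distinction.
\item The Cauchy transform is $-\pi(q_\varepsilon*f)+i\pi(p_\varepsilon*f)$, so you swapped the real and imaginary parts. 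This is harmless.
\end{enumerate}
Also delete the false start ``use $\int_{x-\eta}^{x+\eta}|\phi'|\le\pi/\varepsilon$\dots'', which gives nothing small.
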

\noindent
\textit{Proof. }By Assumption \ref{Assumption}, there is a compact interval $I_0=[E_0-\delta_0,E_0+\delta_0]$, $\delta_0>0$, on which the limiting measure $\nu_\infty$ has a bounded and Hölder continuous density $\rho_\infty$.
Without loss of generality, we assume that $|I_0|\leq 1$ and that the estimate 
\begin{equation}
	\label{eq:LevyMetricNeighbourhood}
	\sup_{\lambda\in I_0}\abs{F_{\nu_L}(\lambda)-F_{\nu_\infty}(\lambda)}\leq C_1d(\nu_L,\nu_\infty)
\end{equation}
 holds for some constant $C_1>0$. It follows from $|I_0|\leq 1$, that $|E_1-E_2|\leq |E_1-E_2|^\gamma$ for all $0\leq\gamma\leq1$ and $E_1,E_2\in I_0$, and we shall repeatedly make use of this fact. Let $I=[E_0-\delta,E_0+\delta]$, with $0<\delta<\delta_0$. Define $\Sigma_\infty(\varepsilon;E)=\int_{-\infty}^\infty (\lambda-E-i\varepsilon)^{-1}\dd \nu_\infty(\lambda)$ for $E\in\mathbb{R}$ and $\varepsilon>0$. We will begin by showing that for any $\gamma\in(0,\alpha)$ there is a constant $C(\gamma)>0$ such that for all $E_1,E_2\in I$
\begin{align}
	\label{eq1}
	\abs{\Sigma_\infty(\varepsilon;E_1)-\Sigma_\infty(\varepsilon;E_2)}\leq C(\gamma)|E_1-E_2|^\gamma,
\end{align}
for all $0<\varepsilon<1$. Then we will show that there is a $C_2>0$, such that for any $E\in I$,
\begin{equation}
	\label{eq:SelfEnergyLtoInftyEstimate}
	\abs{\Sigma_L(\varepsilon;E)-\Sigma_\infty(\varepsilon;E)}\leq C_2 \frac{d(\nu_L,\nu_\infty)}{\varepsilon},
\end{equation}
for all $0<\varepsilon<1$ and $L\in\mathbb{N}$, which then concludes the proof of the lemma via the triangle inequality. In order to prove that (\ref{eq1}) holds, we put $\eta=\delta_0-\delta>0$, and decompose, for any $E\in I$,
$$
\Sigma_\infty(\varepsilon;E)=\underset{:=Q_1(\varepsilon;E)}{\underbrace{\int_{\mathbb{R}\setminus I_0}\frac{1}{\lambda-E-i\varepsilon}\dd \nu_\infty(\lambda)}}+\underset{:=Q_2(\varepsilon;E)}{\underbrace{\int_{I_0\setminus B_\eta(E)}\frac{1}{\lambda-E-i\varepsilon}\dd \nu_\infty(\lambda)}}+\underset{:=Q_3(\varepsilon;E)}{\underbrace{\int_{B_\eta(E)}\frac{1}{\lambda-E-i\varepsilon}\dd \nu_\infty(\lambda)}} ,
$$
and prove that every summand obeys an estimate of the form (\ref{eq1}) individually. To that aim, let $E_1,E_2\in I$ and $\varepsilon\in(0,1)$ be arbitrary. For the first term, we simply observe that
$$
\abs{Q_1(\varepsilon;E_1)-Q_1(\varepsilon;E_2)}\leq\int_{\mathbb{R}\setminus I_0}\abs{\frac{1}{\lambda-E_1-i\varepsilon}-\frac{1}{\lambda-E_2-i\varepsilon}}\dd \nu_\infty(\lambda)\leq \frac{\abs{E_1-E_2}}{\eta^2}\leq\frac{\abs{E_1-E_2}^\gamma}{\eta^2}.
$$
 For the second term, we begin with the estimate
$$
\abs{Q_2(\varepsilon;E_1)-Q_2(\varepsilon;E_2)}\leq\int_{I_0}\abs{\frac{\mathbbm{1}_{B_\eta(E_1)^c}(\lambda)}{\lambda-E_1-i\varepsilon}-\frac{\mathbbm{1}_{B_\eta(E_2)^c}(\lambda)}{\lambda-E_2-i\varepsilon}}\rho_\infty(\lambda)\dd\lambda.
$$
In order to proceed, define
$$
A_{ij}=B_\eta(E_1)^i\cap B_\eta(E_2)^j, \quad i,j\in\{1,2\},
$$
with the understanding that $U^1=U$ and $U^2=U^c$ for any $U\subset\mathbb{R}$.  Then, by writing $I_0=\bigcup_{i,j=1}^2 \left(I_0\cap A_{ij}\right)$, we have that
$$
\abs{Q_2(\varepsilon;E_1)-Q_2(\varepsilon;E_2)}\leq \sum_{i,j=1}^2\int_{I_0 \cap A_{ij}}\abs{\frac{\mathbbm{1}_{B_\eta(E_1)^c}(\lambda)}{\lambda-E_1-i\varepsilon}-\frac{\mathbbm{1}_{B_\eta(E_2)^c}(\lambda)}{\lambda-E_2-i\varepsilon}}\rho_\infty(\lambda)\dd\lambda.
$$
For $(i,j)=(1,1)$, the integrand on the right hand side of the above equation vanishes, so that this combination does not contribute. For the term stemming from the combination $(i,j)=(1,2)$, the integrand simplifies to $|\lambda-E_2-i\varepsilon|^{-1}\rho_\infty(\lambda)$, which can be bounded from above in absolute value by $\eta^{-1}\sup_{\lambda\in I_0}|\rho_\infty(\lambda)|$ on $A_{12}\cap I_0\subset B_{\eta}(E_2)^c$. Moreover, one can easily verify that $|A_{12}|\leq |E_1-E_2|\leq |E_1-E_2|^\gamma$, so that this term can be bounded from above by $\sup_{\lambda \in I_0}\abs{\rho_\infty(\lambda)}\eta^{-1}|E_1-E_2|^\gamma$. The term stemming from the combination $(i,j)=(2,1)$ can be bounded analogously. Lastly, for the term corresponding to the combination $(i,j)=(2,2)$, the integrand becomes $|E_1-E_2| |\lambda-E_1-i\varepsilon|^{-1} |\lambda-E_2-i\varepsilon|^{-1}\rho_\infty(\lambda)$, which is bounded from above by $|E_1-E_2|^\gamma \eta^{-2}\rho_\infty(\lambda)$ on $I_0\cap A_{22}\subset B_\eta(E_1)^c\cap B_\eta(E_2)^c$. Using that $\nu_\infty$ is a probability measure, so that $\int_{I_0}\rho_\infty(\lambda)\dd \lambda \leq 1$, then shows that this term is bounded by $|E_1-E_2|^\gamma \eta^{-2}$. Putting everything together gives that
$$
\abs{Q_2(\varepsilon;E_1)-Q_2(\varepsilon;E_2)}\leq\left( 2\eta^{-1}\sup_{\lambda\in I_0}\abs{\rho_\infty(\lambda)}+\eta^{-2}\right)|E_1-E_2|^\gamma,
$$
so that this term also satisfies an estimate of the form (\ref{eq1}).
For the third summand, a change of variables yields that
\begin{align*}
	Q_3(\varepsilon;E_1)-Q_3(\varepsilon;E_2)&=\int_{B_\eta(E_1)}\frac{1}{\lambda-E_1-i\varepsilon}\dd \nu_\infty(\lambda)-\int_{B_\eta(E_2)}\frac{1}{\lambda-E_2-i\varepsilon}\dd \nu_\infty(\lambda)\\
	&=\int_{B_\eta(0)}\frac{\rho_\infty(E_1+\lambda)-\rho_\infty(E_2+\lambda)}{\lambda-i\varepsilon}\dd\lambda.
\end{align*}
In order to proceed, we write $(\lambda-i\varepsilon)^{-1}=\lambda/(\lambda^2+\varepsilon^2) + i \varepsilon/(\lambda^2+\varepsilon^2)$ and estimate the terms stemming from the real and imaginary part separately. For the term coming from the imaginary part, we use that $\int_{\mathbb{R}}\varepsilon(\lambda^2+\varepsilon^2)\dd\lambda =\pi$, to find
$$
\abs{\Im Q_3(\varepsilon;E_1)-\Im Q_3(\varepsilon;E_2)}= \abs{\int_{B_\eta(0)}\frac{\varepsilon(\rho_\infty(E_1+\lambda)-\rho_\infty(E_2+\lambda))}{\lambda^2+\varepsilon^2}\dd\lambda}\leq \pi M\abs{E_1-E_2}^\alpha,
$$
where $M$ is the Hölder constant for $\rho_\infty$ on $I_0$. For the term stemming from the real part, we use the fact that the function $\lambda\mapsto \lambda/(\lambda^2+\varepsilon^2)$ is odd and that the region integrated over is even. This gives the estimate
\begin{align*}
	\abs{\Re Q_3(\varepsilon;E_1)-\Re Q_3(\varepsilon;E_2)}&=\abs{\int_{B_\eta(0)}\frac{\lambda(\rho_\infty(E_1+\lambda)-\rho_\infty(E_2+\lambda))}{\lambda^2+\varepsilon^2}\dd\lambda}\\
	&\leq 2M\int_{B_\eta(0)} \frac{\min\{\abs{E_1-E_2}^\alpha,\abs{\lambda}^\alpha\}}{\abs{\lambda}}\dd \lambda\\
	&\leq 4M\left[\frac{1}{\alpha} +\abs{ \ln(\frac{\abs{E_1-E_2}}{\eta})}\right]\abs{E_1-E_2}^\alpha.
\end{align*}
Since $\sup_{\lambda\in[0,1]}\lambda^{\alpha-\gamma}|\ln(\lambda)|<\infty$ and $|E_1-E_2|^\alpha\leq |E_1-E_2|^\gamma$ , this shows that
$$
\abs{Q_3(\varepsilon;E_1)-Q_3(\varepsilon;E_2)}\leq \left(\pi M +\frac{4M}{\alpha} +|\ln(\eta)|+\sup_{\lambda\in[0,1]}\lambda^{\alpha-\gamma}|\ln(\lambda)|\right)|E_1-E_2|^\gamma,
$$
which concludes the proof of (\ref{eq1}). We now proceed with proving (\ref{eq:SelfEnergyLtoInftyEstimate}). To that aim fix $0<\varepsilon<1$ and $L\in\mathbb{N}$, and let $\sigma>0$ be such that 
\begin{equation*}
	F_{\nu_\infty}(\lambda-\sigma)-\sigma\leq F_{\nu_L}(\lambda)\leq F_{\nu_\infty}(\lambda+\sigma)+\sigma,
\end{equation*}
for all $\lambda\in\mathbb{R}$. Then, by monotonicity of $F_{\nu_\infty}$,
\begin{equation}
	\label{eq:UniformEstimateLevy}
	\abs{F_{\nu_L}(\lambda)-F_{\nu_\infty}(\lambda)}\leq \sigma + \int_{[\lambda-\sigma,\lambda+\sigma]}\dd\nu_\infty(\kappa),
\end{equation}
for all $\lambda\in\mathbb{R}$. By using integration by parts and by splitting the integration region, we have that for any $E\in I$,
\begin{equation*}
	\abs{\Sigma_L(\varepsilon;E)-\Sigma_\infty(\varepsilon;E)}\leq \int_{B_\eta(E)}\frac{\abs{F_{\nu_L}(\lambda)-F_{\nu_\infty}(\lambda)}}{(\lambda-E)^2+\varepsilon ^2}\dd\lambda+\int_{B_\eta(E)^c}\frac{\abs{F_{\nu_L}(\lambda)-F_{\nu_\infty}(\lambda)}}{(\lambda-E)^2+\varepsilon ^2}\dd\lambda.
\end{equation*}
For the integral over $B_\eta(E)\subset I_0$, we use the estimate (\ref{eq:LevyMetricNeighbourhood}) together with the fact that $\int_{-\infty}^\infty(\lambda^2+\varepsilon^2)^{-1}\dd\lambda =\pi\varepsilon^{-1}$, to bound it from above by $C_1\pi d(\nu_L,\nu_\infty)\varepsilon^{-1}$. For the integral over $B_{\eta}(E)^c$, we use the estimate (\ref{eq:UniformEstimateLevy}) together with Fubini--Tonelli to obtain
\begin{align*}
	\int_{B_\eta(E)^c}\frac{\abs{F_{\nu_L}(\lambda)-F_{\nu_\infty}(\lambda)}}{(\lambda-E)^2+\varepsilon ^2}\dd\lambda\leq \frac{\pi\sigma}{\varepsilon} + \int_{-\infty}^\infty\left(\int_{-\infty}^\infty \frac{f(\lambda,\kappa)}{(\lambda-E)^2+\varepsilon^2}\dd \lambda\right)\dd\nu_{\infty}(\kappa),
\end{align*}
where 
\begin{equation*}
	f(\lambda,\kappa)=
	\begin{cases}
		1 & \text{if }\lambda\in \overline{B_{\sigma}(\kappa)}\cap B_{\eta}(E)^c\\
		0 & \text{else.}
	\end{cases}
\end{equation*}
Here we also made use of the fact that $\mathbbm{1}_{\overline{B_\sigma(\kappa)}}(\lambda)=\mathbbm{1}_{\overline{B_\sigma(\lambda)}}(\kappa)$ for all $\lambda,\kappa\in\mathbb{R}$. Thus
\begin{equation*}
	\int_{-\infty}^\infty\left(\int_{-\infty}^\infty \frac{f(\lambda,\kappa)}{(\lambda-E)^2+\varepsilon^2}\dd \lambda\right)\dd\nu_{\infty}(\kappa)\leq \frac{2\sigma}{\eta^2}\leq \frac{2\sigma}{\eta^2\varepsilon},
\end{equation*}
so that, by taking the limit $\sigma\to d(\nu_L,\nu_\infty)$,
$$
\abs{\Sigma_L(\varepsilon;E)-\Sigma_\infty(\varepsilon;E)}\leq \left(C_1\pi +\pi+ \frac{2}{\eta^2}\right)\frac{d(\nu_L,\nu_\infty)}{\varepsilon},
$$
which concludes the proof of (\ref{eq:SelfEnergyLtoInftyEstimate}), and thus also the proof of the lemma.
\hspace*{\fill}$\square$\\

\noindent
Next, we will show that for suitable values of $\varepsilon>0$ and $L\in\mathbb{N}$, the self-energy $\Sigma_L(E_0;\varepsilon)$ is well approximated by $\Delta(E_0)+i\Gamma(E_0)$, where the energy shift $\Delta(E_0)$ and the decay rate $\Gamma(E_0)$ are defined in Equations (\ref{eq:EnergyShift}) and (\ref{eq:DecayRate}). 
\begin{lemma}
	\label{Lemma3}
	Let $(H_L(\mu))_{L\in\mathbb{N}}$ be a sequence of Friedrichs Hamiltonians associated to triples $(E_0,K_L,\mu g_L)$ satisfying Assumption \ref{Assumption} for some $\alpha\in(0,1]$. Then there is a compact neighbourhood $I$ of $E_0$, such that for all $E\in I$, the quantity
	\begin{align*}
		\Delta(E)&=\lim_{\delta\to 0^+}\int_{\mathbb{R}\setminus B_\delta(E)}\frac{1}{\lambda - E}\dd \nu_\infty(\lambda)
	\end{align*}
	exists. Moreover, there is a $C>0$ such that for all  $0<\varepsilon<1$ and $L\in\mathbb{N}$ we have that
	\begin{align}
		\label{eq:SelfEnergyApprox}
		\abs{\Sigma_L(\varepsilon;E_0)-\left(\Delta(E_0) + i \pi\rho_\infty(E_0)\right)}&\leq C\left[\frac{d(\nu_L,\nu_\infty)}{\varepsilon} + \varepsilon^{\alpha}\right].
	\end{align}
\end{lemma}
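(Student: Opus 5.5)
The proof splits into two pieces, joined by the triangle inequality. Let $\Sigma_\infty(\varepsilon;E)=\int(\lambda-E-i\varepsilon)^{-1}\dd\nu_\infty(\lambda)$, as in the proof of Lemma \ref{Lemma1}. That proof already gives estimate (\ref{eq:SelfEnergyLtoInftyEstimate}):
$$
\abs{\Sigma_L(\varepsilon;E_0)-\Sigma_\infty(\varepsilon;E_0)}\leq C_2\, d(\nu_L,\nu_\infty)/\varepsilon
$$
for all $0<\varepsilon<1$. It therefore remains to show that $\Delta(E)$ exists for $E\in I$ and that
$$
\abs{\Sigma_\infty(\varepsilon;E_0)-\Delta(E_0)-i\pi\rho_\infty(E_0)}\leq C\varepsilon^\alpha .
$$

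\emph{Existence of $\Delta(E)$.} Take $I_0=[E_0-\delta_0,E_0+\delta_0]$ and $I=[E_0-\delta,E_0+\delta]$ as before, and set $\eta=\delta_0-\delta$. For $E\in I$ and $0<\delta'<\eta$, split the truncated integral into the part over $\mathbb{R}\setminus B_\eta(E)$, which is independent of $\delta'$ and finite since the integrand is bounded by $\eta^{-1}$, and the part over $\{\delta'\le|\lambda-E|<\eta\}$. By symmetry of the region, the second part equals
$$
\int_{\delta'\le|s|<\eta}\frac{\rho_\infty(E+s)-\rho_\infty(E)}{s}\dd s .
$$
Its integrand is bounded by $M|s|^{\alpha-1}$, which is integrable near $0$. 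Dominated convergence then gives existence of the limit as $\delta'\to0^+$:
$$
\Delta(E)=\int_{\mathbb{R}\setminus B_\eta(E)}\frac{\dd\nu_\infty(\lambda)}{\lambda-E}+\int_{B_\eta(0)}\frac{\rho_\infty(E+s)-\rho_\infty(E)}{s}\dd s.
$$

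\emph{Approximation at $E_0$.} Write $\Sigma_\infty(\varepsilon;E_0)$ with the same split at radius $\eta$.

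\textbf{Outer part.} Here $\abs{(\lambda-E_0-i\varepsilon)^{-1}-(\lambda-E_0)^{-1}}\le\varepsilon/\eta^2$, so this part differs from its $\Delta$-counterpart by $O(\varepsilon)\le O(\varepsilon^\alpha)$.

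\textbf{Inner part, real part.} It is $\int_{B_\eta(0)} s(s^2+\varepsilon^2)^{-1}\rho_\infty(E_0+s)\dd s$. By oddness we may subtract $\rho_\infty(E_0)$, and then compare with the inner integral in $\Delta(E_0)$. The difference is
$$
\int_{B_\eta(0)}\big(\rho_\infty(E_0+s)-\rho_\infty(E_0)\big)\Big(\frac{s}{s^2+\varepsilon^2}-\frac1s\Big)\dd s,
$$
and since $\frac{s}{s^2+\varepsilon^2}-\frac1s=-\frac{\varepsilon^2}{s(s^2+\varepsilon^2)}$, its modulus is at most $M\int_{\mathbb{R}}|s|^{\alpha-1}\varepsilon^2(s^2+\varepsilon^2)^{-1}\dd s$. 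Substituting $s=\varepsilon u$ turns this into $M\varepsilon^\alpha\int_{\mathbb{R}}|u|^{\alpha-1}(1+u^2)^{-1}\dd u$. This integral is finite because $\alpha\in(0,1]$, which is exactly where the full exponent $\alpha$, not merely $\gamma<\alpha$, is obtained.

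\textbf{Inner part, imaginary part.} It is $\int_{B_\eta(0)}\varepsilon(s^2+\varepsilon^2)^{-1}\rho_\infty(E_0+s)\dd s$. Subtract $\pi\rho_\infty(E_0)$, writing $\pi=\int_{\mathbb{R}}\varepsilon(s^2+\varepsilon^2)^{-1}\dd s$. The tail $|s|\ge\eta$ of this Cauchy integral is $O(\varepsilon/\eta)$, and $\rho_\infty$ is bounded. The Hölder difference is bounded by $M\int|s|^\alpha\varepsilon(s^2+\varepsilon^2)^{-1}\dd s$ restricted to $|s|<\eta$. After scaling this is $M\varepsilon^\alpha\int_{|u|<\eta/\varepsilon}|u|^\alpha(1+u^2)^{-1}\dd u$, which is $O(\varepsilon^\alpha)$ when $\alpha<1$. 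When $\alpha=1$ it only gives $O(\varepsilon\ln(1/\varepsilon))$.

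This last case is the main obstacle. For $\alpha=1$, I would first try to exploit the evenness of the Cauchy kernel: replace $\rho_\infty(E_0+s)-\rho_\infty(E_0)$ by the symmetrized difference $\tfrac12\big(\rho_\infty(E_0+s)+\rho_\infty(E_0-s)\big)-\rho_\infty(E_0)$. However, Lipschitz continuity alone does not give this second-order smallness. If that fails, I would note that $\varepsilon\ln(1/\varepsilon)\le C\varepsilon^{\alpha'}$ for any $\alpha'<1$, and check whether the later use of the lemma tolerates this, since the theorem only ever needs $\gamma<\alpha$. Combining the outer and inner estimates with (\ref{eq:SelfEnergyLtoInftyEstimate}) yields (\ref{eq:SelfEnergyApprox}).
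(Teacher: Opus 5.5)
For $\alpha\in(0,1)$ your argument is essentially the paper's. You reduce to $\Sigma_\infty$ via (\ref{eq:SelfEnergyLtoInftyEstimate}), split off a fixed neighbourhood of $E_0$, and bound the inner real and imaginary parts by the Hölder estimate after scaling $s=\varepsilon u$. Comparing the real part directly with your symmetrised formula for $\Delta(E_0)$, instead of using the paper's cutoff $\sigma\to0^+$, is slightly tidier but not different in substance.

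The obstacle you hit at $\alpha=1$ is genuine, and it is not a weakness of your method: the estimate (\ref{eq:SelfEnergyApprox}) is false for $\alpha=1$. The paper's proof slips at exactly this point. It asserts $\int_0^\infty\lambda^\alpha(\lambda^2+1)^{-1}\dd\lambda<\infty$, which diverges logarithmically when $\alpha=1$.

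To see that the statement fails, fix $\varepsilon$ and let $L\to\infty$. Then $d(\nu_L,\nu_\infty)\to0$, and $\Sigma_L(\varepsilon;E_0)\to\Sigma_\infty(\varepsilon;E_0)$ by weak convergence. So the lemma would force $\abs{\Im\Sigma_\infty(\varepsilon;E_0)-\pi\rho_\infty(E_0)}\leq C\varepsilon$. Now take the Lipschitz density $\rho_\infty(\lambda)=\tfrac13(1+\abs{\lambda-E_0})\mathbbm{1}_{[E_0-1,E_0+1]}(\lambda)$. A direct computation gives
\[
\Im\Sigma_\infty(\varepsilon;E_0)-\pi\rho_\infty(E_0)=\tfrac13\left(\varepsilon\ln(1+\varepsilon^{-2})-2\arctan\varepsilon\right)\sim\tfrac23\,\varepsilon\ln(1/\varepsilon),
\]
which is not $O(\varepsilon)$. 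This perturbation is even in $\lambda-E_0$, so your symmetrisation idea gains nothing and you should drop it.

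The correct statement for $\alpha=1$ replaces $\varepsilon^\alpha$ by $\varepsilon\ln(e/\varepsilon)$, or equivalently by $\varepsilon^{\alpha'}$ with constant $C_{\alpha'}$ for any $\alpha'<1$. That is exactly what your computation proves.

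Your fallback also suffices for the application. In the proof of Theorem \ref{Theorem:ExpDecay} one has $\varepsilon=x\mu^{2-2\gamma}$ with $x=d(\nu_L,\nu_\infty)/\mu^2<1$ and $\mu<1$. Choosing $\alpha'\in[\gamma,1)$, Young's inequality gives
\[
\varepsilon^{\alpha'}=x^{\alpha'}\mu^{(2-2\gamma)\alpha'}\leq x+\mu^{(2-2\gamma)\alpha'/(1-\alpha')}\leq x+\mu^{2\gamma}.
\]
Hence the theorem's error bound $C(\gamma)\bigl(\mu^{2\gamma}+d(\nu_L,\nu_\infty)/\mu^2\bigr)$ still holds at $\alpha=1$. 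Only the lemma itself needs the logarithmic correction.
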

\noindent
\textit{Proof. }By Assumption \ref{Assumption}, there is a compact interval $I$ around $E_0$, on which $\nu_\infty$ has a bounded and Hölder continuous density $\rho_\infty$. The existence of $\Delta(E)$ for any $E\in I$ follows easily from the Hölder continuity of $\rho_\infty$. In order to prove the estimate (\ref{eq:SelfEnergyApprox}), it is sufficient to prove that for all $0<\varepsilon<1$,
\begin{align}
	\abs{\Im\Sigma_\infty(\varepsilon;E_0)-\pi\rho_\infty(E_0)}&\leq C \varepsilon^{\alpha},\\
	\abs{\Re\Sigma_\infty(\varepsilon;E_0)-\Delta(E_0)}&\leq C \varepsilon^{\alpha},
\end{align}
for some $C>0$, using (\ref{eq:SelfEnergyLtoInftyEstimate}). We begin by proving the estimate for the imaginary part. To that aim, we note that
$$
\Im\Sigma_\infty(\varepsilon;E_0)-\pi\rho_\infty(E_0)=\int_{-\infty}^\infty\frac{\varepsilon}{(\lambda-E_0)^2+\varepsilon^2}\dd\nu_\infty(\lambda) - \int_{-\infty}^\infty\frac{\varepsilon}{(\lambda-E_0)^2+\varepsilon^2}\rho_\infty(E_0)\dd\lambda.
$$
Next, we split the integration regions into $B_\delta(E_0)\cup B_\delta(E_0)^c$, where $0<\delta<1$ is chosen such that $B_\delta(E_0)\subset I$. It then follows from the estimates
\begin{align*}
	\abs{\int_{B_\delta(E_0)^c}\frac{\varepsilon}{(\lambda-E_0)^2+\varepsilon^2}\dd\nu_\infty(\lambda)}&\leq \frac{\varepsilon}{\delta^2},\\
	\abs{\int_{B_\delta(E_0)^c}\frac{\varepsilon}{(\lambda-E_0)^2+\varepsilon^2}\rho_\infty(E_0)\dd\lambda}&\leq 2\sup_{\lambda\in I}\abs{\rho_\infty(\lambda)}\int_\delta^\infty\frac{\varepsilon}{\lambda^2+\varepsilon^2}\dd\lambda\leq 2\sup_{\lambda\in I}\abs{\rho_\infty(\lambda)}\frac{\varepsilon}{\delta},
\end{align*}
that
\begin{align*}
	\abs{\Im\Sigma_\infty(\varepsilon;E_0)-\pi\rho_\infty(E_0)}&\leq \int_{B_\delta(E_0)} \frac{\varepsilon\abs{\rho_\infty(\lambda)-\rho_\infty(E_0)}}{(\lambda-E_0)^2+\varepsilon^2}\dd\lambda + \frac{\varepsilon}{\delta^2}+ 2\sup_{\lambda\in I}\abs{\rho_\infty(\lambda)}\frac{\varepsilon}{\delta}\\
	&\leq 2M\int_0^\delta \frac{\varepsilon\lambda^{\alpha}}{\lambda^2+\varepsilon^2} \dd\lambda + \frac{\varepsilon}{\delta^2}+ 2\sup_{\lambda\in I}\abs{\rho_\infty(\lambda)}\frac{\varepsilon}{\delta}\\
	&\leq 2M\varepsilon^{\alpha}\int_0^{\delta/\varepsilon}\frac{\lambda^\alpha}{\lambda^2+1}\dd\lambda +(1+2\sup_{\lambda\in I}\abs{\rho_\infty(\lambda)})\frac{\varepsilon}{\delta^2},
\end{align*}
where $M$ is the Hölder constant of $\rho_\infty$ on $I$. Since $\int_0^\infty \lambda^\alpha/(\lambda^2+1)\dd\lambda<\infty$, the estimate for the imaginary part now follows since $0<\varepsilon<1$, so that $\varepsilon<\varepsilon^\alpha$. For the estimate involving the real part, we write
$$
\Re\Sigma_\infty(\varepsilon;E_0)-\Delta(E_0)=\lim_{\sigma\to 0^+}\int_{-\infty}^\infty \left[\frac{\lambda-E_0}{(\lambda-E_0)^2+\varepsilon^2}-\frac{\mathbbm{1}_{B_\sigma(E_0)^c}(\lambda)}{\lambda-E_0}\right]\dd\nu_\infty(\lambda).
$$
Again, we split the integration region into $B_\delta(E_0)\cup B_\delta(E_0)^c$ for sufficiently small $0<\delta<1$, such that $B_\delta(E_0)\subset I$. Then, for any $\sigma<\delta$, the integral over $B_\delta(E_0)$ can be estimated as
\begin{align*}
&\abs{\int_{B_\delta(E_0)} \left[\frac{\lambda-E_0}{(\lambda-E_0)^2+\varepsilon^2}-\frac{\mathbbm{1}_{B_\sigma(E_0)^c}(\lambda)}{\lambda-E_0}\right]\dd\nu_\infty(\lambda)}\\
&\quad\leq \int_{B_\delta(E_0)}\abs{ \left[\frac{\lambda-E_0}{(\lambda-E_0)^2+\varepsilon^2}-\frac{\mathbbm{1}_{B_\sigma(E_0)^c}(\lambda)}{\lambda-E_0}\right](\rho_\infty(\lambda)-\rho_\infty(E_0))}\dd\lambda\\
&\quad\leq M\int_{B_\delta(E_0)}\frac{|\lambda-E_0|^{\alpha-1}(\varepsilon^2 + (\lambda-E_0)^2\mathbbm{1}_{B_\sigma(E_0)}(\lambda))}{(\lambda-E_0)^2+\varepsilon^2}\dd\lambda\\
&\quad\leq 2M\varepsilon^{\alpha}\int_0^{\delta/\varepsilon}\frac{\lambda^{\alpha-1}}{\lambda^2+1}\dd\lambda + 2M\frac{\sigma^{2+\alpha}}{\varepsilon^2}.
\end{align*}
For the integral over $B_\delta(E_0)^c$, we note that $\mathbbm{1}_{B_\sigma(E_0)^c}(\lambda)=1$ for $\lambda\in B_\delta(E_0)^c$. Hence
\begin{align*}
	&\abs{\int_{B_\delta(E_0)^c} \left[\frac{\lambda-E_0}{(\lambda-E_0)^2+\varepsilon^2}-\frac{\mathbbm{1}_{B_\sigma(E_0)^c}(\lambda)}{\lambda-E_0}\right]\dd\nu_\infty(\lambda)}\leq\frac{\varepsilon^2}{\delta^3} \int_{B_\delta(E_0)^c}\dd\nu_\infty(\lambda)\leq \frac{\varepsilon^2}{\delta^3}.
\end{align*}
By taking $\sigma\to 0^+$ and using that $K:=\int_0^\infty\lambda^{\alpha-1}(\lambda^2+1)^{-1}\dd\lambda<\infty$, we therefore obtain
$$
\abs{\Re\Sigma_\infty(\varepsilon;E_0)-\Delta(E_0)}\leq\left(2KM + \delta^{-3}\right)\varepsilon^\alpha,
$$
which concludes the proof of the lemma.
\hspace*{\fill}$\square$\\

\section{Exponential Decay}
\label{Sec:ExpDecay}
In this section we shall make use of the regularity results on the self-energies provided in the previous section in order to prove that the excited state $\chi$ decays in an approximately exponential manner under favourable conditions, following the outline discussed in the introduction. We begin by deriving an explicit formula for the density of the smeared spectral measure in terms of the self-energy.
\begin{lemma}
	\label{Lemma3.1}
	Let $H(\mu)$ be a Friedrichs Hamiltonian associated to the triple $(E_0,K,\mu g)$ with self-energy $\Sigma(E+i\varepsilon)\equiv \Sigma(\varepsilon; E)=\langle g, (K-E-i\varepsilon)^{-1}g\rangle$. For $\varepsilon>0$, let
	\begin{equation}
		p_\varepsilon(\lambda)=\frac{1}{\pi}\frac{\varepsilon}{\lambda^2+\varepsilon^2},\quad \lambda\in\mathbb{R}.
	\end{equation}
Let $\sigma^\mu$ denote the spectral measure associated to $\chi$ by $H(\mu)$. Then
\begin{equation}
	\label{eq:SmearedDensity}
	\rho^\mu(\varepsilon;E):=(p_\varepsilon * \sigma^\mu)(E)=\frac{1}{\pi}\frac{\varepsilon+\mu^2 \Im\Sigma(\varepsilon;E)}{(E-E_0+\mu^2\Re\Sigma(\varepsilon;E))^2 + (\varepsilon+\mu^2 \Im\Sigma(\varepsilon;E))^2}.
\end{equation}
\end{lemma}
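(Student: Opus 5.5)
The plan is to relate the Cauchy smearing to the imaginary part of the resolvent. Write $z=E+i\varepsilon$. For any probability measure $\sigma$ on $\mathbb{R}$,
\begin{equation*}
(p_\varepsilon*\sigma)(E)=\frac{1}{\pi}\int_{\mathbb{R}}\frac{\varepsilon}{(\lambda-E)^2+\varepsilon^2}\dd\sigma(\lambda)=\frac{1}{\pi}\Im\int_{\mathbb{R}}\frac{1}{\lambda-z}\dd\sigma(\lambda).
\end{equation*}
Applied to $\sigma^\mu$, the spectral theorem turns the last integral into $\langle\chi,(H(\mu)-z)^{-1}\chi\rangle$. Hence $\rho^\mu(\varepsilon;E)=\pi^{-1}\Im\langle\chi,(H(\mu)-z)^{-1}\chi\rangle$, and the task reduces to computing this diagonal resolvent matrix element explicitly.

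For that I will use the block structure of $H(\mu)$ on $\mathbb{C}\oplus\mathcal{K}$, via the Feshbach/Schur complement formula. Solve $(H(\mu)-z)(\lambda,\psi)=(1,0)$ for $z\notin\mathbb{R}$. The second component reads $(K-z)\psi+\mu\lambda g=0$, so $\psi=-\mu\lambda(K-z)^{-1}g$; this is legitimate since $z\notin\sigma(K)$ and $\psi\in D(K)$. Inserting this into the first component gives
\begin{equation*}
(E_0-z)\lambda-\mu^2\lambda\langle g,(K-z)^{-1}g\rangle=1,
\end{equation*}
that is, $\lambda=(E_0-z-\mu^2\Sigma(z))^{-1}$. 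The denominator is nonzero because $\Im(E_0-z-\mu^2\Sigma(z))=-\varepsilon-\mu^2\Im\Sigma(\varepsilon;E)\le-\varepsilon<0$, using $\Im\Sigma(\varepsilon;E)\ge0$. Since $H(\mu)-z$ is invertible, this solution is the resolvent applied to $\chi$, and therefore
\begin{equation*}
\langle\chi,(H(\mu)-z)^{-1}\chi\rangle=\frac{1}{E_0-E-\mu^2\Re\Sigma(\varepsilon;E)-i(\varepsilon+\mu^2\Im\Sigma(\varepsilon;E))}.
\end{equation*}

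Taking imaginary parts, using $\Im(a-ib)^{-1}=b/(a^2+b^2)$ for real $a$ and $b$, and noting that $(E_0-E-\mu^2\Re\Sigma)^2=(E-E_0+\mu^2\Re\Sigma)^2$, then dividing by $\pi$, gives exactly \eqref{eq:SmearedDensity}.

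There is no genuine obstacle here. The only points needing care are the sign conventions in the imaginary part and confirming that the algebraic solution belongs to $D(H(\mu))=\mathbb{C}\oplus D(K)$, which holds because $(K-z)^{-1}g\in D(K)$.
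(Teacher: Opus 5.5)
Your proof is correct. The first step, writing the Cauchy smearing as $\pi^{-1}\Im\langle\chi,(H(\mu)-E-i\varepsilon)^{-1}\chi\rangle$ via the spectral theorem, is the same as the paper's. The difference lies in how the diagonal resolvent element is computed.

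The paper expands $(H(\mu)-z)^{-1}$ in a Neumann series in the off-diagonal perturbation $W(\mu)=H(0)-H(\mu)$, which converges only for $\Im z>\mu$. The relations $P_{\mathbb{C}}W(\mu)=W(\mu)P_{\mathcal{K}}$ and $P_{\mathcal{K}}W(\mu)=W(\mu)P_{\mathbb{C}}$ eliminate the odd-order terms in the $\chi$--$\chi$ matrix element. The even-order terms then resum into the geometric series $(E_0-z)^{-1}\sum_n[\mu^2\Sigma(z)(E_0-z)^{-1}]^n$. Finally, the identity $\langle\chi,(H(\mu)-z)^{-1}\chi\rangle=(E_0-z-\mu^2\Sigma(z))^{-1}$ is extended to the whole upper half-plane by analyticity.

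Your Schur-complement computation instead solves $(H(\mu)-z)(\lambda,\psi)=\chi$ exactly. This buys you several things:
\begin{enumerate}
\item It needs neither the smallness condition $\Im z>\mu$ nor the analytic-continuation step.
\item It gives the $\mathcal{K}$-component $\psi=-\mu\lambda(K-z)^{-1}g$ of the resolvent at no extra cost.
\item The same algebra applies directly to real $z\in\rho(K)\cap\rho(H(\mu))$. That is exactly the situation at the real points of the contour $\gamma(L)$ in the proof of Theorem~\ref{Theorem:ExpDecaySB}, where the paper needs a further continuity remark.
\end{enumerate}

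In exchange, the paper's route presents the formula as a resummation of self-energy insertions, the Dyson-series picture. That is conceptually suggestive but not needed for the lemma itself.
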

\noindent
\textit{Proof. }It follows from functional calculus that
\begin{equation*}
	(p_\varepsilon*\sigma^\mu)(E)=\int_{-\infty}^\infty\frac{1}{\pi}\frac{\varepsilon}{(E-\lambda)^2+\varepsilon^2}\dd\sigma^\mu(\lambda)=\frac{1}{\pi}\Im\left\langle\chi, (H(\mu)-E-i\varepsilon)^{-1}\chi\right\rangle.
\end{equation*}
In order to conclude the proof, we must now analyse the structure of the expectation value of the resolvent $\langle \chi,(H(\mu)-E-i\varepsilon)^{-1}\chi\rangle$. To that aim, let $z\in\mathbb{C}$ with $\Im(z)>\mu$ be arbitrary. Then, the Neumann series
$$
(H(\mu)-z)^{-1}=(H(\mu)-z)^{-1}\sum_{n=0}^{\infty}\left[(H(0) - H(\mu))(H(0)-z)^{-1}\right]^n,
$$
is absolutely convergent, since $W(\mu):=H(0)-H(\mu)$ can be extended to a bounded operator with norm $\mu$. Let $P_{\mathbb{C}}$ and $P_{\mathcal{K}}$ be the orthogonal projections onto $\mathbb{C}$ and $\mathcal{K}$ respectively. Then it is easy to verify that
\begin{enumerate}[(i)]
	\item $P_\mathbb{C}=|\chi\rangle\langle\chi|$,
	\item $P_\mathbb{C}W(\mu)= W(\mu) P_{\mathcal{K}}$,
	\item $P_{\mathcal{K}}W(\mu)=W(\mu) P_{\mathbb{C}}$.
\end{enumerate}
Making use of these relations, we find that
\begin{align*}
	\langle\chi,(H(\mu)-z)^{-1}\chi\rangle&=(E_0-z)^{-1}\sum_{n=0}^\infty \left[\mu^2\langle g,(K-z)^{-1}g\rangle (E_0-z)^{-1}\right]^n=\frac{1}{E_0-z-\mu^2\Sigma(z)}.
\end{align*}
By analyticity, this equality extends to all of $\{z\in\mathbb{C}\text{ : }\Im(z)>0\}$. The lemma now follows by taking the imaginary part of the equation above at $z=E+i\varepsilon$.
\hspace*{\fill}$\square$\\

\noindent
We are now ready to prove the main result of this paper, namely the approximately exponential decay of the survival amplitude. The idea is to first appropriately ``smear out" the spectral measure associated to $\chi$ by $H_L(\mu)$ and then to show that the resulting density is well approximated in $L^1(\mathbb{R})$ by a Cauchy distribution using the regularity results proven in Section \ref{Sec:SelfEnergyReg}, similar to the proof of \cite[Theorem 4.1]{JensenNenciuOddDim}.\\

\noindent
\textit{Proof of Theorem \ref{Theorem:ExpDecay}. }Let $(H_L(\mu))_{L\in\mathbb{N}}$ be a sequence of Friedrichs Hamiltonians associated to triples $(E_0,K_L,\mu g_L)$ satisfying Assumption \ref{Assumption} for some $\alpha\in(0,1]$. Let $\Delta(E_0)$ and $\Gamma(E_0)$ be defined by (\ref{eq:EnergyShift}) and (\ref{eq:DecayRate}) and put
$$
R_L^\mu(t):=\langle\chi,\exp(-itH_L(\mu))\chi\rangle - e^{-i(E_0-\mu^2\Delta(E_0))t}e^{-\mu^2\Gamma(E_0)t},\quad t\geq 0.
$$
 Fix a $\gamma\in (0,\alpha)$. Throughout this proof, we shall abbreviate $\varepsilon\equiv \varepsilon(\mu,L):=d(\nu_L,\nu_\infty)\mu^{-2\gamma}$. Recall that we aim to prove the existence of a $C(\gamma)>0$ such that, for all $t\geq 0$,
  \begin{equation}
 	\abs{R^\mu_L(t)}\leq C(\gamma)\left(\mu^{2\gamma} + \left(\frac{d(\nu_L,\nu_\infty)}{\mu^{2}}\right)^{\alpha} \right)\exp(\frac{d(\nu_L,\nu_\infty)}{\mu^{2\gamma}}t),
 \end{equation}
for all $\mu\geq 0$ and $L\in\mathbb{N}$. Letting $\rho_L^\mu(\varepsilon;\cdot)$ and $\tilde{\rho}_L^\mu(\varepsilon;\cdot)$ be defined by Equations (\ref{eq:rho}) and (\ref{eq:rhotilde}) respectively, the above estimate follows if we can prove that there is a $C(\gamma)>0$, such that $$\norm{\rho_L^\mu(\varepsilon;\cdot)-\tilde{\rho}_L^\mu(\varepsilon;\cdot)}_{L^1(\mathbb{R})}\leq C(\gamma)\left(\mu^{2\gamma} + \left(\frac{d(\nu_L,\nu_\infty)}{\mu^2}\right)^\alpha\right),$$
for all $\mu\geq 0$ and $L\in\mathbb{N}$, using the estimate (\ref{eq:3}).
Let $\mu_0,q_0,\varepsilon_0>0$ be arbitrary. Since $\norm{\rho_L^\mu(\varepsilon;\cdot)-\tilde{\rho}_L^\mu(\varepsilon;\cdot)}_{L^1(\mathbb{R})}\leq 2$, we may, by choosing $C(\gamma)>0$ sufficiently large, restrict to considering the case where
\begin{equation}
	\label{eq:restrictions}
	\mu<\mu_0 \text{ and }\frac{d(\nu_L,\nu_\infty)}{\mu^2}<q_0,
\end{equation}
and, consequently, also to the case where
\begin{equation}
	\label{eq:epsilon}
	\varepsilon \equiv \varepsilon(\mu,L):= \frac{d(\nu_L,\nu_\infty)}{\mu^{2\gamma}}<\varepsilon_0,
\end{equation}
by taking $\mu_0<1$. We also choose $\varepsilon_0,q_0<1$. By Lemma \ref{Lemma1} there is a compact interval $I$ around $E_0$ and a constant $C_1(\gamma)>0$ such that the estimate (\ref{eq:Lemma1}) holds for all $E_1,E_2\in I$. Also fix a $\delta\in(0,2)$, which we will use to control the size of the interval around $E_0$ on which we will perform our estimates. We choose $\mu_0<1$ such that for any $\mu<\mu_0$ the closed interval $J(\mu)$ centred about $E_0-\mu^2\Delta(E_0)$ with $|J(\mu)|=\mu^{2-\delta}$ satisfies $J(\mu)\subset I$.
Let us now fix arbitrary $\mu\geq 0$ and $L\in\mathbb{N}$ such that (\ref{eq:restrictions}) and (\ref{eq:epsilon}) hold. Put $J\equiv J(\mu)$. 
Define
\begin{align*}
	G(E)&=E_0-E-i\varepsilon -\mu^2 \Sigma_L(\varepsilon;E)\\
	\tilde{G}(E)&=E_0-E-i\varepsilon -\mu^2 (\Delta(E_0)+i\Gamma(E_0)),
\end{align*}
so that $\rho(E)\equiv \rho_L^\mu(\varepsilon;E)=\pi^{-1}\Im G(E)^{-1}$ by Lemma \ref{Lemma3.1} and $\tilde{\rho}(E)\equiv \tilde{\rho}_L^\mu(\varepsilon;E)=\pi^{-1}\Im \tilde{G}(E)^{-1}$. The estimates in Lemma \ref{Lemma1} and Lemma \ref{Lemma3} imply that there is a $C_2(\gamma)>0$ such that for all $E\in J$,
\begin{equation*}
	\abs{G(E)-\tilde{G}(E)}\leq C_2(\gamma)\mu^2\left[ \varepsilon^\alpha +\mu^{2\gamma}+ \abs{E-E_0}^\gamma\right].
\end{equation*}
Since $\abs{\tilde{G}(E)}\geq \mu^2\Gamma(E_0)$, we may, by appropriate choice of $\mu_0$ and $\varepsilon_0$, assume that $\abs{G(E)}\geq 2^{-1}\abs{\tilde{G}(E)}$ for all $E\in J$.
 It follows that
\begin{align*}
	\norm{(\rho-\tilde{\rho})\mathbbm{1}_J}_{L^1(\mathbb{R})}&\leq  \frac{1}{\pi}\int_J\abs{\frac{1}{G(E)}-\frac{1}{\tilde{G}(E)}}\dd E\leq \frac{2}{\pi}\int_J\frac{\abs{G(E)-\tilde{G}(E)}}{\abs{\tilde{G}(E)}^2}\dd E\\
	&\leq\frac{4C_2(\gamma)}{\pi\Gamma(E_0)}\int_{0}^{\mu^{2-\delta}}\frac{\mu^2\Gamma(E_0)}{E^2 + (\mu^2\Gamma(E_0))^2}\left[\varepsilon^\alpha + \mu^{2\gamma} + E^\gamma +\mu^{2\gamma}\abs{\Delta(E_0)}^\gamma\right]\dd E\\
	&\leq C_3(\gamma)\left[\varepsilon^\alpha + \mu^{2\gamma} +\mu^{2\gamma} \int_0^{\mu^{-\delta}/\Gamma(E_0)}\frac{E^\gamma}{E^2+1}\dd E\right],
\end{align*}
where in the second-to-last line, we used that $\abs{E-E_0}^\gamma\leq \abs{E-E_0+\mu^2\Delta(E_0)}^\gamma + (\mu^2\abs{\Delta(E_0)})^\gamma$. By virtue of the fact that $\int_0^\infty x^\gamma/(x^2+1)\dd x<\infty$, it follows that there is a $C_4(\gamma)>0$ such that $\norm{(\rho-\tilde{\rho})\mathbbm{1}_J}_{L^1(\mathbb{R})}\leq C_4(\gamma)(\varepsilon^\alpha + \mu^{2\gamma})$. Now, the density $\tilde{\rho}$ is concentrated on $J$, namely
\begin{align*}
	\norm{\tilde{\rho}\mathbbm{1}_{\mathbb{R}\setminus J}}_{L^1(\mathbb{R})}&=1- \pi^{-1}\int_{-\abs{J}/2}^{\abs{J}/2} \frac{\varepsilon + \mu^2 \Gamma(E_0)}{E^2+(\varepsilon +\mu^2\Gamma(E_0))^2}\dd E\leq\frac{2}{\pi}\int_{\abs{J}/2}^\infty \frac{\varepsilon+\mu^2 \Gamma(E_0)}{E^2}\dd E\\
	&\leq \frac{4}{\pi}\frac{\varepsilon + \mu^2\Gamma(E_0)}{|J|}\leq  C_5(\gamma)\left(\frac{d(\nu_L,\nu_\infty)}{\mu^{2\gamma}\mu^{2-\delta}} + \mu^{\delta}\right).
\end{align*}
This suggests that we choose $\delta=2\gamma$. Since $\rho$ and $\tilde{\rho}$ are both non-negative and have unit norm in $L^1(\mathbb{R})$, it follows that
\begin{align*}
	\int_{\mathbb{R}\setminus J}\rho(E)\dd E&=1-\int_{J}\rho(E)\dd E=1-\int_{J}(\rho(E)-\tilde{\rho}(E))\dd E - \int_{J}\tilde{\rho}(E)\dd E\\
	&=\int_{\mathbb{R}\setminus J}\tilde{\rho}(E)\dd E - \int_{J}(\rho(E)-\tilde{\rho}(E))\dd E.
\end{align*}
In particular, this implies the estimate $\norm{\rho \mathbbm{1}_{\mathbb{R}\setminus J}}_{L^1(\mathbb{R})} \leq \norm{\tilde{\rho}\mathbbm{1}_{\mathbb{R}\setminus J}}_{L^1(\mathbb{R})} + \norm{(\rho-\tilde{\rho})\mathbbm{1}_J}_{L^1(\mathbb{R})}$. Hence, using that $\varepsilon<d(\nu_L,\nu_\infty)/\mu^2<1$, we find that
$$
	\norm{\rho-\tilde{\rho}}_{L^1(\mathbb{R})}\leq C_6(\gamma)\left(\mu^{2\gamma} + \left(\frac{d(\nu_L,\nu_\infty)}{\mu^{2}}\right)^{\alpha} \right),
$$
which is the desired inequality.
\hspace*{\fill}$\square$
\section{Spontaneous Emission in Large Cavities}
\label{Sec:WWAtom}
	In this section we will show that, under certain regularity assumptions, the confined spin-boson Hamiltonian in the rotating-wave approximation $H_L^{\text{SB}}(E_0,\mu g_L)$, defined in \eqref{eq:Hamiltonian}, satisfies Assumption \ref{Assumption} when restricted to the one-excitation sector. Recall from the introduction that this restricted Hamiltonian is unitarily equivalent to the Friedrichs Hamiltonian associated to the triple $(E_0,K_L,\mu g_L)$, where $K_L$ is the self-adjoint operator acting by $(K_Lf)(k)=|k|f(k)$ on $\ell^2(\Lambda_L)$ and where $\Lambda_L=2\pi L^{-1}\mathbb{Z}^3$. We will derive the coupling functions $g_L$ from a function $g\in L^2(\mathbb{R}^3)$ satisfying the following regularity assumption:
	\begin{assumptionA}{R}
		\label{AssumptionR}
		The function $g\in \text{L}^2(\mathbb{R}^3)$ satisfies $g\in H^1(\mathbb{R}^3)\cap C^1(\mathbb{R}^3)$, $\norm{g}_{L^2(\mathbb{R}^3)}=1$ and $g(k)|\leq C(1+|k|^2)^{-1/2}$ for some $C>0$.
	\end{assumptionA}
	\noindent
 Given a function $g\in\text{L}^2(\mathbb{R}^3)$ satisfying Assumption \ref{AssumptionR} and an $L>0$, we define $g_L\in \ell^2(\Lambda_L)$ by $g_L(k)=C_L(2\pi/L)^{3/2}g(k)$, where the normalisation constant $C_L$ is given by
	\begin{equation}
		C_L=\left[\left(\frac{2\pi}{L}\right)^3\sum_{k\in\Lambda_L}\abs{g(k)}^2\right]^{-1/2}.
	\end{equation}
	\begin{lemma}
		\label{Lemm_Properties}
		Let $g\in L^2(\mathbb{R}^3)$ satisfy Assumption \ref{AssumptionR} and let $\nu_L$ denote the spectral measure associated to $g_L$ by $K_L$. Then $\nu_L$ converges weakly to an absolutely continuous measure $\nu_\infty$ as $L\to\infty$. The Radon-Nikodym derivative of $\nu_\infty$ is given by
		\begin{equation}
			\label{eq_DefRhoInfty}
			\rho_\infty(E)=\mathbbm{1}_{(0,\infty)}(E)\cdot\norm{g\restriction_{S(E)}}_{\text{L}^2(S(E),\dd\omega_E)}^2, \quad E\in\mathbb{R},
		\end{equation}
		where $S(E)=\{k\in\mathbb{R}^3\text{ : }\abs{k}=E\}$ denotes the energy-shell of energy $E$, equipped with the natural surface measure $\omega_E$. Moreover, the density $\rho_\infty$ is locally Hölder continuous of degree $\alpha$, for any $\alpha\in(0,1/2)$, and there is a $K>0$ such that
		\begin{equation}
			\label{eq:SBdensity}
					d(\nu_L,\nu_\infty)\leq \sup_{\lambda\in\mathbb{R}}\abs{F_{\nu_L}(\lambda)-F_{\nu_\infty}(\lambda)}\leq K L^{-1}.
		\end{equation}
		In particular, for any $E_0>0$ with $\rho_\infty(E_0)>0$, the sequence of Friedrichs Hamiltonians associated to the triples $(E_0,K_L,\mu g_L)$ satisfies Assumption \ref{Assumption} for any $\alpha\in(0,1/2)$.
	\end{lemma}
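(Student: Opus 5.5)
We identify the measures explicitly and reduce everything to a lattice-point counting estimate. By definition of the spectral measure,
\[
F_{\nu_L}(\lambda)=\sum_{k\in\Lambda_L,\ |k|\le\lambda}|g_L(k)|^2=C_L^2\Big(\frac{2\pi}{L}\Big)^3\sum_{k\in\Lambda_L,\ |k|\le\lambda}|g(k)|^2 ,
\]
which is a normalised Riemann sum for
\[
F_{\nu_\infty}(\lambda):=\int_{|k|\le\lambda}|g(k)|^2\,\dd k .
\]
Polar coordinates show that $\nu_\infty$ has the density $\rho_\infty$ of \eqref{eq_DefRhoInfty}. It is a probability measure since $\norm{g}_{L^2}=1$.

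\textbf{Step 1: the uniform bound \eqref{eq:SBdensity}.} The first inequality $d\le\sup|F_{\nu_L}-F_{\nu_\infty}|$ is immediate from the definition of the Lévy metric. For the second, we compare lattice sums with integrals cell by cell. Each $k\in\Lambda_L$ gets its cube $Q_k$ of side $2\pi/L$, and we write
\[
\Big(\frac{2\pi}{L}\Big)^3\sum_{|k|\le\lambda}|g(k)|^2-\int_{|k|\le\lambda}|g|^2
=\sum_{|k|\le\lambda}\int_{Q_k}\big(|g(k)|^2-|g(q)|^2\big)\dd q+\Big(\int_{\bigcup_{|k|\le\lambda} Q_k}-\int_{B_\lambda}\Big)|g|^2 .
\]
\begin{itemize}
\item[(a)] In the first sum we use $\big||g(k)|^2-|g(q)|^2\big|\le 2\sup_{Q_k}|g|\,|\nabla g|\cdot\sqrt3\,\tfrac{2\pi}{L}$, since $g\in C^1$. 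Summing over $k$ and comparing with $\int |g||\nabla g|\le\norm{g}_{L^2}\norm{\nabla g}_{L^2}$ gives $O(L^{-1})$. Strictly this needs a local-maximal-type control of $\sup_{Q_k}$ by averages; I would handle this by a mean-value/Sobolev estimate on slightly enlarged cubes, which is where $H^1\cap C^1$ is used.
\item[(b)] The boundary term is supported in a shell of width $O(L^{-1})$ around $S(\lambda)$. There $|g|^2\le C^2(1+|k|^2)^{-1}$, so its contribution is $\lesssim L^{-1}\lambda^2(1+\lambda^2)^{-1}\le CL^{-1}$, uniformly in $\lambda$.
\end{itemize}
Taking $\lambda\to\infty$ in the same estimate gives $|C_L^{-2}-1|\le CL^{-1}$, so the normalisation changes things only by $O(L^{-1})$. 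Hence $\sup_\lambda|F_{\nu_L}-F_{\nu_\infty}|\le KL^{-1}$, which in particular yields weak convergence.

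\textbf{Step 2: local Hölder continuity of $\rho_\infty$.} Write
\[
\rho_\infty(E)=E^2\int_{S^2}|g(E\omega)|^2\,\dd\omega .
\]
For $E_1,E_2$ in a compact subset of $(0,\infty)$, the difference is controlled by $\int_{S^2}\big||g(E_1\omega)|^2-|g(E_2\omega)|^2\big|\dd\omega$. Since $g\in C^1$, $\nabla g$ is bounded on compact annuli, so this term is even locally Lipschitz away from $0$. The prefactor $E^2$ is smooth.

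The stated exponent $\alpha<1/2$ suggests the authors instead use only $g\in H^1$, via a trace inequality of the form
\[
\Big|\norm{g}^2_{L^2(S(E_1))}-\norm{g}^2_{L^2(S(E_2))}\Big|\lesssim \int_{E_1}^{E_2}\int_{S(r)}|g||\nabla g|,
\]
followed by Cauchy--Schwarz in $r$, which gives exponent $1/2$. Either route suffices for every $\alpha\in(0,1/2)$. Boundedness of $\rho_\infty$ near any $E_0>0$ follows from continuity.

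\textbf{Conclusion and main obstacle.} Given Steps 1 and 2, Assumption \ref{Assumption} holds for any $E_0>0$ with $\rho_\infty(E_0)>0$, taking a compact interval $I\subset(0,\infty)$ around $E_0$. The main obstacle is Step 1(a): obtaining a bound uniform in $\lambda$ with only $L^{-1}$ loss, where $g$ is merely $C^1\cap H^1$ with no decay assumed on $\nabla g$. My first attempt would be to write
\[
|g(k)|^2-|g(q)|^2=\int_0^1 \nabla|g|^2\big(q+s(k-q)\big)\cdot(k-q)\,\dd s,
\]
then integrate over $q\in Q_k$ and sum over $k$. After changing variables, this bounds the total by $\tfrac{C}{L}\int|\nabla |g|^2|\le \tfrac{C}{L}\norm{g}_{L^2}\norm{\nabla g}_{L^2}$, because the segments from lattice points sweep each region with bounded multiplicity.
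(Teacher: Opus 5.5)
Your Step 1 follows the paper's argument: the same cubes $Q_L(k)$, the same shell of width $O(L^{-1})$ for the boundary term, and the same use of $\lambda=+\infty$ to control $C_L$. For the boundary term the paper bounds it by $\sup\rho_\infty\cdot O(L^{-1})$ with $\rho_\infty(E)\lesssim E^2/(1+E^2)$, which is your (b).

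Step 2 takes a different route. The paper gets H\"older continuity of order $<1/2$ from the $H^1$ trace theorem (Reed--Simon IX.40), applied to $E\mapsto g(E\,\cdot)\in L^2(S(1))$. You instead use $g\in C^1$ to get $\rho_\infty\in C^1((0,\infty))$. Your route is correct, more elementary and stronger: it gives the H\"older part of Assumption (A$\alpha$) for every $\alpha\le 1$. The paper's route needs only $H^1$ for this part.

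The gap is Step 1(a), and the repair you propose does not work. The segments from $k$ to the points $q\in Q_L(k)$ all start at $k$, so their multiplicity is unbounded near $k$. Substitute $y=k+(1-s)(q-k)$, so that $\dd q=(1-s)^{-3}\dd y$ and $|q-k|=|y-k|/(1-s)$, and integrate out $s$. With $f=|g|^2$ this yields only $\int_{Q_L(k)}|f(k)-f(q)|\,\dd q\lesssim L^{-3}\int_{Q_L(k)}|\nabla f(y)|\,|y-k|^{-2}\,\dd y$. That is a Riesz-potential weight, not $L^{-1}\int_{Q_L(k)}|\nabla f|$.

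This cannot be improved, because in three dimensions point values are not controlled by $\int|\nabla f|$. A bump of height $1$ and width $w\ll L^{-1}$ at $k$ makes the left side $\sim L^{-3}$, while $L^{-1}\int_{Q_L(k)}|\nabla f|\sim L^{-1}w^2$.
\begin{itemize}
\item Controlling $\sup_{Q_L(k)}$ by averages on enlarged cubes would need more derivatives than $H^1\cap C^1$ provides.
\item Your first version, bounding by $\sup_{Q_L(k)}|g||\nabla g|$, is fine cell by cell. But it leaves $\sum_k L^{-3}\sup_{Q_L(k)}|\nabla f|$ to control, and (R) says nothing about $\nabla g$ at infinity beyond square-integrability.
\end{itemize}

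Your suspicion of this step is justified, and the paper does not supply the missing idea. It passes from $\sum_k\int_{Q_L(k)}|f(k)-f(q)|\,\dd q$ to $K_1L^{-1}\int|\nabla f|$ ``using Poincar\'e's inequality''. Poincar\'e controls deviations from the cell \emph{average}, not from the centre value $f(k)$, and the bump example refutes the asserted bound.

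In fact the estimate is false under (R) alone. Note first that $(1+|k|^2)^{-1}$ is not integrable on $\mathbb{R}^3$, so the decay hypothesis does not control lattice sums. Now construct $g$ as follows:
\begin{itemize}
\item take $L_j=2^j$ and well-separated shells $A_j=\{R_j\le|k|\le R_j+1\}$ with $R_j\to\infty$;
\item add to a Gaussian narrow smooth bumps of height $c/(R_j+2)$, centred at the points of $(\Lambda_{L_j}\setminus\Lambda_{L_{j-1}})\cap A_j$;
\item choose the widths so small that the bumps have finite total $H^1$-norm.
\end{itemize}
After normalising, this $g$ is $C^\infty\cap H^1$ and satisfies (R). Yet $(2\pi/L_j)^3\sum_{k\in\Lambda_{L_j}}|g(k)|^2-\norm{g}_{L^2}^2$ stays bounded below by a constant of order $c^2$. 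So $C_{L_j}^2$ stays away from $1$ and $\nu_{L_j}$ sends a fixed fraction of its mass off to infinity. Neither the $O(L^{-1})$ bound nor weak convergence holds.

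An extra hypothesis is needed, for instance $|g(k)|+|\nabla g(k)|\le C(1+|k|)^{-3/2-\epsilon}$. Then $|\nabla f|$ has an integrable majorant that is comparable to a constant on each cell. Consequently $\sum_kL^{-3}\sup_{Q_L(k)}|\nabla f|$ is bounded uniformly in $L$, and your Step 1 goes through as written.
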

	\noindent
	\textit{Proof. }Since $g\in H^1(\mathbb{R}^3)$, the map
	$$
	E\mapsto g\restriction_{S(E)}(E\cdot )
	$$
	is Hölder continuous of order $\alpha$ as an $\text{L}^2(S(1),\dd \omega_1)$-valued function for any $\alpha\in(0,1/2)$, see for example \cite[Theorem IX.40]{ReedSimonVol2}. Now, $\norm{g\restriction_{S(E)}}_{\text{L}^2(S(E),\dd\omega_E)}^2=E^2\norm{g\restriction_{S(E)}(E\cdot)}_{\text{L}^2(S(1),\dd\omega_1)}^2$, which proves local Hölder continuity of $\rho_\infty$ defined in Equation (\ref{eq_DefRhoInfty}). Moreover
	\begin{equation*}
		\rho_\infty(E)=E^2\norm{g\restriction_{S(E)}}^2_{L^2(S(1),\dd\omega_1)}\leq 4\pi C \frac{E^2}{1+E^2},
	\end{equation*}
	showing that $\sup_{\lambda\geq 0}\rho_\infty(\lambda)<\infty$.
	Let $\lambda\in\mathbb{R}\cup\{+\infty\}$ be arbitrary. For any $L\in\mathbb{N}$, the cumulative distribution function associated to $\nu_L$ is given by 
	\begin{equation*}
		F_{\nu_L}(\lambda) =C_L^2\left(\frac{2\pi}{L}\right)^3\sum_{k\in\Lambda_L, \abs{k}\leq \lambda}\abs{g(k)}^2.
	\end{equation*}
	Letting $\nu_\infty$ denote the probability measure associated to the density $\rho_\infty$, a change of variables yields that
	\begin{equation*}
	F_{\nu_\infty}(\lambda)=\int_{(-\infty,\lambda]}\rho_\infty(E)\dd E=\int_{\abs{k}\leq \lambda}\abs{g(k)}^2\dd k.
	\end{equation*}
	For $k\in \mathbb{R}^3$, let $Q_L(k)=\bigtimes_{i=1}^3[k_i-\pi/L,k_i+\pi/L)$ be the cube of side-length $2\pi/L$ centred around $k$ and put $\tilde{B}_\lambda^L=\bigcup_{k\in\Lambda_L,\abs{k}\leq \lambda}Q_L(k)$. Then, letting $A\Delta B$ denote the symmetric difference of two sets $A,B\subset\mathbb{R}^3$ and abbreviating $f\equiv \abs{g}^2$,
	\begin{align*}
		&\abs{\left(\frac{2\pi}{L}\right)^3\sum_{k\in\Lambda_L,\abs{k}\leq\lambda}f(k)-\int_{\abs{k}\leq \lambda}f(k)\dd k}\\
		&\quad\leq \sum_{k\in\Lambda_L,\abs{k}\leq \lambda}\int_{Q_L(k)}\abs{f(k)-f(q)}\dd q + \int_{B_\lambda(0)\Delta \tilde{B}^L_\lambda}f(q)\dd q\\
		&\quad\leq \frac{K_1}{L}\sum_{k\in\Lambda_L,\abs{k}\leq \lambda}\int_{Q_L(k)}\abs{\nabla f(q)}\dd q +\int_{S_\lambda^L}f(q)\dd q\\
		&\quad\leq  \frac{K_1}{L}\int_{\mathbb{R}^3}\abs{\nabla f(q)}\dd q + \int_{\lambda - \sqrt{12}\pi /L}^{\lambda+\sqrt{12}\pi/L}\rho_\infty(E)\dd E\\
		&\quad\leq \frac{2K_1}{L}\norm{g}_{H^1(\mathbb{R}^3)}^2 + \sup_{\lambda\geq 0}\rho_\infty(\lambda)\frac{\sqrt{48}\pi}{L}=:K_2L^{-1},
	\end{align*}
	using Poincare's inequality and where $S_\lambda^L=B_{\lambda + \sqrt{12}\pi/L}(0)\setminus B_{\lambda - \sqrt{12}\pi/L}(0)\supset B_\lambda(0)\Delta \tilde{B}_\lambda^L$. In particular, using the estimate for $\lambda=+\infty$, we have that $\abs{C_L^2-1}\leq 2K_2L^{-1}$ for sufficiently large $L\in\mathbb{N}$. Hence
	$$
	\abs{F_{\nu_L}(\lambda)-F_{\nu_\infty}(\lambda)}=\abs{C_L^2\left(\frac{2\pi}{L}\right)^3\sum_{k\in\Lambda_L, \abs{k}\leq \lambda}\abs{g(k)}^2-\int_{|k|\leq \lambda}|g(k)|^2\dd k}\leq KL^{-1},
	$$
	uniformly in $\lambda\in\mathbb{R}$, for some $K>0$.
	Finally, note that the first inequality in \eqref{eq:SBdensity} follows directly from the definition of $d(\nu_L,\nu_\infty)$.
	\hspace*{\fill}$\square$\\

\noindent
We now take the size $L$ of the torus to infinity while simultaneously taking a weak-coupling limit, by setting $\mu_L=L^{-\beta}$ for some $\beta>0$. The following theorem identifies the parameter region for $\beta$ where spontaneous emission occurs.
\begin{theorem}
	\label{Theorem:ExpDecaySB}
	Let $g\in\text{L}^2(\mathbb{R}^3)$ satisfy Assumption $(R)$ and let $E_0>0$ be such that $$\Gamma(E_0)=\pi\norm{g\restriction_{S(E_0)}}_{\text{L}^2(S(E_0),\dd\omega_{E_0})}^2>0.$$ Fix a $\beta\in (0,1/2)\cup(1/2,\infty)$ and put $\mu_L=L^{-\beta}$. Then the simultaneous weak coupling and continuum limit
\begin{equation*}
	a_{wcc}(\tau):=\lim_{L\to\infty}\left\langle e\otimes \Omega, \exp(-i\tau\mu_L^{-2}(H_L^{\text{SB}}(E_0,\mu_Lg_L)-E_0)) e\otimes\Omega \right\rangle,\quad \tau\geq 0,
\end{equation*}
exists and is given by
\begin{equation}
	a_{wcc}(\tau)=
	\begin{cases}
		e^{i\Delta(E_0) \tau}e^{-\Gamma(E_0) \tau} & \text{if } \beta \in (0,1/2)\\
		1 & \text{if }  \beta \in (1/2,\infty).
	\end{cases}
\end{equation}
Moreover, there is a $C>0$ and, for any $0<\alpha_-<1/2$ and $\tau\geq 0$, a $C_{\tau,\alpha_-}>0$, such that
\begin{align*}
&\abs{a_{wcc}(\tau)-\left\langle e\otimes \Omega, \exp(-i\tau\mu_L^{-2}(H_L^{\text{SB}}(E_0,\mu_Lg_L)-E_0)) e\otimes\Omega \right\rangle}\\
&\quad\leq \begin{cases}
	C_{\tau,\alpha_-} L^{-\min\{2\beta \alpha_-,(1-2\beta)\alpha_-\}}& \text{if } \beta \in (0,1/2)\\
	CL^{1-2\beta} & \text{if }  \beta \in (1/2,\infty).
\end{cases}
\end{align*}
\end{theorem}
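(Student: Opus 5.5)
The plan is to reduce everything to the Friedrichs Hamiltonian associated to $(E_0,K_L,\mu_L g_L)$ on $\mathbb{C}\oplus\ell^2(\Lambda_L)$, via the one-excitation-sector unitary equivalence from the introduction. Under this equivalence $e\otimes\Omega$ is mapped to $\chi$, and the quantity inside the limit becomes
\[
a_L(\tau):=e^{iE_0\tau\mu_L^{-2}}\langle\chi,\exp(-i\tau\mu_L^{-2}H_L(\mu_L))\chi\rangle .
\]
By Lemma \ref{Lemm_Properties}, Assumption \ref{Assumption} holds for every $\alpha\in(0,1/2)$, and $d(\nu_L,\nu_\infty)\le KL^{-1}$. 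The two regimes of $\beta$ are then treated separately.

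\textbf{Case $\beta\in(0,1/2)$.} Apply Theorem \ref{Theorem:ExpDecay} with some $\alpha\in(\alpha_-,1/2)$ and $\gamma=\alpha_-$, at $t=\tau\mu_L^{-2}$ and $\mu=\mu_L$. The main term is exactly $e^{i\Delta(E_0)\tau}e^{-\Gamma(E_0)\tau}$. For the error, note that $d/\mu_L^2\le KL^{2\beta-1}$, so
\[
\Big(\frac{d}{\mu_L^2}\Big)^\alpha\le K^\alpha L^{-(1-2\beta)\alpha}\le K^\alpha L^{-(1-2\beta)\alpha_-}.
\]
The remaining error contributions are bounded as follows:
\begin{itemize}
\item $\mu_L^{2\gamma}=L^{-2\beta\alpha_-}$;
\item the growth factor satisfies $\exp\big(d\,\mu_L^{-2\gamma}\tau\mu_L^{-2}\big)\le\exp\big(K\tau L^{2\beta(1+\alpha_-)-1}\big)$.
\end{itemize}
The exponent $2\beta(1+\alpha_-)-1$ is negative only if $\beta<1/(2(1+\alpha_-))$, so the theorem as stated does not directly cover all $\beta<1/2$. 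This is the main obstacle.

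To get around it, I would not use Theorem \ref{Theorem:ExpDecay} verbatim but rerun its proof with a different smearing scale. Take $\varepsilon=\mu_L^2L^{-\kappa}$ for a small $\kappa>0$ with $\kappa<1-2\beta$, so that $\varepsilon t=\tau L^{-\kappa}\to0$. The estimates in that proof then yield:
\begin{itemize}
\item $d/\varepsilon\le KL^{2\beta+\kappa-1}\to0$, which by Lemmas \ref{Lemma1} and \ref{Lemma3} controls $|G-\tilde G|$ on $J$ by $\mu^2\big[\varepsilon^\alpha+\mu^{2\gamma}+|E-E_0|^\gamma+d/\varepsilon\big]$;
\item an $L^1$ bound of order $L^{-2\beta\alpha_-}+L^{-(1-2\beta-\kappa)}$, plus the tail term $(\varepsilon+\mu^2)/|J|$.
\end{itemize}
Choosing $\kappa$ small relative to $1-2\beta$ recovers the stated exponent $\min\{2\beta\alpha_-,(1-2\beta)\alpha_-\}$ after slightly shrinking $\alpha_-$, since $d/\varepsilon\le L^{-(1-2\beta)\alpha_-}$ once $\kappa\le(1-2\beta)(1-\alpha_-)$. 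The factor $e^{\varepsilon t}$ is bounded by $e^{\tau}$, which is absorbed into $C_{\tau,\alpha_-}$. The same argument also yields Corollary \ref{Corallary_SimWeakContLimit}, so one could alternatively invoke that corollary directly with $d\le KL^{-1}$.

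\textbf{Case $\beta>1/2$.} Here the coupling is weaker than the level spacing, and the expected limit is $1$. I would use Duhamel's formula rather than spectral smearing. Writing $H_L(\mu)=H_L(0)+\mu V$, where $V$ is the off-diagonal rank-two part, Duhamel gives
\[
1-a_L(\tau)=\mu^2\int_0^t\!\!\int_0^s e^{iE_0(s-r)}\langle g_L,e^{-i(s-r)K_L}g_L\rangle\,\langle\chi,e^{-irH_L(\mu)}\chi\rangle\,dr\,ds
\]
up to unimodular phases, with $t=\tau\mu^{-2}$. A naive bound on the double integral is $\mu^2t^2=\tau^2\mu^{-2}$, which is useless, so the oscillation must be exploited. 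For this I would use the resolvent instead: by Lemma \ref{Lemma3.1}, $\sigma^\mu_L$ is the spectral measure with Stieltjes transform $(E_0-z-\mu^2\Sigma_L(z))^{-1}$. From this, $1-\sigma^\mu_L(\{\text{eigenvalue near }E_0\})$ is controlled by $\mu^2/\mathrm{dist}^2$, where $\mathrm{dist}$ is the distance from $E_0$ to the nearby atoms of $\nu_L$, weighted by their masses of order $L^{-2}$ at spacing of order $L^{-1}$. A cleaner route is
\[
1-|a_L|^2\le \mu^2\,\big\|(K_L-E_0)^{-1}g_L\big\|^2
\]
via first-order perturbation of the isolated eigenvalue, but for generic $E_0$ this quantity is of order $L$. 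Combined with $\mu^2=L^{-2\beta}$, this gives the bound $CL^{1-2\beta}$, matching the claimed rate. Since $E_0$ may approach lattice energies, I would replace the pure spectral estimate by the Duhamel bound with one time integration, $\mu^2\int_0^t|\cdots|$, and use the Lévy-distance control $\sup_\lambda|F_{\nu_L}-F_{\nu_\infty}|\le KL^{-1}$ to bound the mass of $\nu_L$ within distance $\mu^2$ of any point. This step, securing the $L^{1-2\beta}$ rate uniformly in $E_0$, is the part I expect to require the most care.
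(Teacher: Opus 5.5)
Your reduction to the Friedrichs model and your treatment of $\beta\in(0,1/2)$ are sound. You are right that Theorem~\ref{Theorem:ExpDecay} with $\gamma=\alpha_-$ leaves the factor $\exp(K\tau L^{2\beta(1+\alpha_-)-1})$ unbounded for $\beta$ near $1/2$. The paper's one-line appeal to Corollary~\ref{Corallary_SimWeakContLimit} passes over this. That corollary is also stated for $\mu_L=d(\nu_L,\nu_\infty)^\beta$, not $\mu_L=L^{-\beta}$, so neither you nor the paper can invoke it verbatim. The cheapest repair is the $\beta$-dependent choice $\gamma=\min\{\alpha_-,(1-2\beta)/(2\beta)\}$. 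Then $d(\nu_L,\nu_\infty)\mu_L^{-2-2\gamma}\le K$, so the growth factor is at most $e^{K\tau}$. The error $\mu_L^{2\gamma}+K^\alpha L^{-(1-2\beta)\alpha}$, with $\alpha\in(\alpha_-,1/2)$, still decays like $L^{-\min\{2\beta\alpha_-,(1-2\beta)\alpha_-\}}$. Your rerun with $\varepsilon=\mu_L^2L^{-\kappa}$, carrying $d/\varepsilon$ explicitly through Lemmas~\ref{Lemma1} and~\ref{Lemma3}, is also correct and gives a slightly better exponent.

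For $\beta>1/2$ there is a genuine gap, and more care will not close it. What you write is a list of candidate strategies, not an argument, and the estimate you use to match the rate $L^{1-2\beta}$ rests on a miscount. The quantity $\|(K_L-E_0)^{-1}g_L\|^2=\int(\lambda-E_0)^{-2}\dd\nu_L(\lambda)$ is not of order $L$. Distinct levels of $K_L$ near $E_0$ are spaced on the scale $L^{-2}$ (cf.\ \eqref{eq:specdist}) and carry $\nu_L$-mass of order $L^{-2}$ on average. The integral is therefore dominated by the nearest levels and is heuristically of order $L^2$. It is infinite if $E_0\in\sigma(K_L)$, which for $E_0=2\pi$ happens for every $L$. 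The Lévy bound $KL^{-1}$ also cannot resolve masses on the scale $\mu_L^2\ll L^{-1}$.

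More fundamentally, controlling $1-|a_L|^2$, or the weight $w$ of the eigenvalue $\lambda_*$ of $H_L(\mu_L)$ near $E_0$, says nothing about the phase. Since $\lambda_*-E_0=-\mu_L^2\Sigma_L(\lambda_*)$, the most such an estimate yields is $a_L(\tau)=w\,e^{i\tau\Sigma_L(\lambda_*)}+O(1-w)$. Equivalently, replace the amplitude inside your Duhamel double integral by its unperturbed value (for $E_0\notin\sigma(K_L)$). This gives $1-a_L(\tau)\approx-i\tau\Sigma_L(E_0)+\mu_L^2\int(\lambda-E_0)^{-2}(1-e^{-i(\lambda-E_0)t})\dd\nu_L(\lambda)$, whose first term is of order one, not $O(L^{1-2\beta})$. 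The energy shift is of the same order $\mu_L^2$ as the decay rate and does not disappear on the van Hove scale $t=\tau\mu_L^{-2}$.

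In fact the second case cannot be proved as stated, and the paper's argument fails at the corresponding point. The Neumann series on $\gamma(L)$ and the bound $\abs{\Sigma_L}\le C_2L$ are fine. However, $\abs{e^{-itz}}=e^{t\Im z}$ reaches $e^{t\varepsilon_L}$ on $\gamma(L)$. So the bound $C_4L^{1-2\beta}$ is justified only for bounded $t$ (at $t=0$ it correctly gives $\langle\chi,P_L\chi\rangle\to1$), not for $t=\tau L^{2\beta}$.

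The claimed uniformity in $t$ is actually false. The time average of $e^{itE_0}\langle\chi,e^{-itH_L(\mu_L)}\chi\rangle$ equals $\sigma_L^{\mu_L}(\{E_0\})$. This vanishes unless $E_0$ is an eigenvalue of $H_L(\mu_L)$, which for $E_0\notin\sigma(K_L)$ forces $\Sigma_L(E_0)=0$.

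The value $a_{wcc}=1$ also contradicts Theorem~\ref{Theorem:ExpDecay} itself. Take radial $g$, for instance a normalised Gaussian. Partial summation together with the elementary bound $\#\{m\in\mathbb{Z}^3:\abs{m}\le R\}=\frac{4\pi}{3}R^3+O(R^{5/3})$ gives $d(\nu_L,\nu_\infty)\le\sup_\lambda\abs{F_{\nu_L}(\lambda)-F_{\nu_\infty}(\lambda)}=O(L^{-4/3})$. (The lattice bound follows by slicing into discs and using $\pi r^2+O(r^{2/3})$.) This is much better than the $KL^{-1}$ of Lemma~\ref{Lemm_Properties}. For $\beta\in(1/2,2/3)$, Theorem~\ref{Theorem:ExpDecay} with small $\gamma>0$ then gives $a_{wcc}(\tau)=e^{i\Delta(E_0)\tau}e^{-\Gamma(E_0)\tau}\neq1$ for $\tau>0$. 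So the threshold $\beta=1/2$ is an artefact of the crude Lévy bound. No argument, yours or the paper's, can establish $a_{wcc}\equiv1$ on all of $(1/2,\infty)$.
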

\noindent
\textit{Proof. }We shall prove the equivalent claim for the sequence of Friedrichs Hamiltonians $H_L(\mu_L)$ associated to the triples $(E_0,K_L,\mu_Lg_L)$. The case where $\beta\in(0,1/2)$ is a direct consequence of Corollary \ref{Corallary_SimWeakContLimit} and Lemma \ref{Lemm_Properties}. 
To prove the claim for $\beta\in(1/2,\infty)$, let $(\varepsilon_L)_{L\in\mathbb{N}}$ with $\varepsilon_L\in(1/2,3/2)$ be a sequence such that for sufficiently large $L\in\mathbb{N}$
$$\gamma(L)\equiv\set{z\in\mathbb{C}\text{ : }\abs{z-E_0}=\varepsilon_L}\subset\rho\left(H_{L}(\mu_L)\right),$$
and such that, for some $c>0$, $\text{dist}(z,\sigma(K_{L}))\geq cL^{-2}$ for  $z\in\gamma(L)$. Note that the first condition can always be achieved since $H_{L}(\mu_L)$ has discrete spectrum. The second condition can be achieved since $\sigma(K_{L})=\left\{\frac{2\pi |k|}{L}\text{ : }k\in\mathbb{Z}^3\right\}$, from which it follows that \begin{equation}
	\label{eq:specdist}
	\inf\left\{|\lambda_1-\lambda_2| : \lambda_1,\lambda_2\in \sigma(K_L)\cap[E_0-1,E_0+1], \lambda_1\neq \lambda_2\right\} \geq C_1L^{-2},
\end{equation}
for some constant $C_1>0$. To see this, let $\lambda_1=2\pi L^{-1}|k_1|\in \sigma(K_L)\cap [E_0-1,E_0+1]$, $k_1\in \mathbb{Z}^3$, be arbitrary. For any $\lambda_2\in \sigma(K_L)$, we then have that $$|\lambda_1-\lambda_2|\geq 2\pi L^{-1}\left(\sqrt{k_1^2+1}-|k_1|\right)=\frac{2\pi L^{-1}}{\sqrt{k_1^2+1}+|k_1|}\geq \frac{\pi}{L^2}\frac{1}{\sqrt{\left(\frac{E_0+1}{2\pi}\right)^2+ L^{-2}}}.$$
Thus, the estimate \eqref{eq:specdist} holds with $C_1=\pi\left[\left(\frac{E_0+1}{2\pi}\right)^2+1\right]^{-1/2}$. Next, we shall prove that there is a $C_2>0$ such that \begin{equation}
	\label{eq:SelfenergyboundL}
	\sup_{z\in\gamma(L)}\abs{\Sigma_L(z)}\leq C_2L.
\end{equation}
To that aim, note that for any $z\in \rho(K_L)$,
$$
\Sigma_L(z)=C_L^2\left(\frac{2\pi}{L}\right)^3\sum_{k\in\Lambda_L}\frac{|g(k)|^2}{|k|-z}.
$$
Now fix a $z\in \gamma(L)$. We separate the summation region into $\Lambda_L\equiv \frac{2\pi}{L}\mathbb{Z}^3=\Lambda_L^{(1)}\cup\Lambda_L^{(2)}$, where $\Lambda_L^{(1)}=\{k\in\Lambda_L\text{ : }\abs{|k|-z}> L^{-1}\}$ and $\Lambda_L^{(2)}=\{k\in\Lambda_L\text{ : }\abs{|k|-z}\leq L^{-1}\}$. For the summation over $\Lambda_L^{(1)}$, we have that
$$
\abs{C_L^2\left(\frac{2\pi}{L}\right)^3\sum_{k\in\Lambda_L^{(1)}}\frac{|g(k)|^2}{|k|-z}}\leq L^{-1} C_L^2\left(\frac{2\pi}{L}\right)^3\sum_{k\in\Lambda_L}|g(k)|^2=L^{-1},
$$
using the definition of $C_L$. By choice of the contour $\gamma(L)$, it follows that $\abs{|k|-z}\geq cL^{-2}$ for all $k\in\Lambda_L$. Hence, denoting by $\#$ the counting measure, we find
$$
\abs{C_L^2\left(\frac{2\pi}{L}\right)^3\sum_{k\in\Lambda_L^{(2)}}\frac{|g(k)|^2}{|k|-z}}\leq \frac{C_L^2(2\pi)^3}{cL}\norm{g}_{L^\infty(\mathbb{R}^3)}^2\#\Lambda_L^{(2)},
$$
so that the estimate \eqref{eq:SelfenergyboundL} follows if we can prove that there is a $C_3>0$ so that $\#\Lambda_L^{(2)}\leq C_3 L^2$. Let $\tilde{\Lambda}_L^{(2)}=\left\{k\in\Lambda_L\text{ : }\abs{|k|-|z|}\leq L^{-1}\right\}$ and put $Q_L^{(2)}=\bigcup_{k\in\tilde{\Lambda}_L^{(2)}}Q_L(k)$, where $Q_L(k)$ are the cubes of side-length $2\pi L^{-1}$ centred around $k$. Since $\Lambda_L^{(2)}\subset \tilde{\Lambda}_L^{(2)}$ and $$Q_L^{(2)}\subset \{k\in\mathbb{R}^3\text{ : } \abs{|k|-|z|}\leq (1+2\pi\sqrt{3})L^{-1}\},$$
we find that
 $$\#\Lambda_L^{(2)}\leq \#\tilde{\Lambda}_L^{(2)}=\left(\frac{L}{2\pi}\right)^3 \abs{Q_L^{(2)}}\leq \left(\frac{L}{2\pi}\right)^3 4\pi \left(\abs{z} + 1+2\pi\sqrt{3}\right)^2\frac{2+4\pi \sqrt{3}}{L},
 $$
 proving the estimate for $\#\Lambda_L^{(2)}$. Now, define the projection
$$
P_L=\frac{1}{2\pi i}\oint_{\gamma(L)}\left(H_{L}(\mu_L)-z\right)^{-1}\dd z.
$$
Since for any $z\in\gamma(L)$, $\abs{\Sigma_{L}(z)}\leq C_2L$, the Neumann series
$$
\left\langle\chi,\left(H_{L}(\mu_L)-z\right)^{-1}\chi\right\rangle=(E_0-z)^{-1}\sum_{m=0}^\infty \left[\mu_L^2\Sigma_{L}(z) (E_0-z)^{-1}\right]^m
$$
is absolutely convergent for all $z\in\gamma(L)$, when $L\in\mathbb{N}$ is sufficiently large. Using the representation
$$
\left\langle\chi,\exp(-itH_{L}(\mu_L))P_L\chi\right\rangle=\frac{1}{2\pi i}\oint_{\gamma(L)}e^{-itz}\left\langle\chi,\left(H_{L}(\mu_L)-z\right)^{-1}\chi\right\rangle\dd z,
$$
it follows that there is a $C_4>0$ such that
$$
\abs{\left\langle\chi,\exp(-itH_{L}(\mu_L))P_L\chi\right\rangle-e^{-itE_0}}\leq C_4L^{1-2\beta}.
$$
In particular, by setting $t=0$, this implies that $\norm{(1-P_L)\chi}\leq C_4L^{1-2\beta}$.
Thus, using the unitarity of $\exp(-itH_{L}(\mu_L))$, we have that
$$
\abs{\left\langle\chi,\exp(-itH_{L}(\mu_L))\chi\right\rangle-e^{-itE_0}}\leq 2C_4 L^{1-2\beta}.
$$
Since the estimate is uniform in $t\in\mathbb{R}$, the $\beta\in(1/2,\infty)$ case follows.
\hspace*{\fill}$\square$

\vspace{3mm}

\noindent\textbf{Acknowledgments.} Funded by the Deutsche
Forschungsgemeinschaft (DFG, German Research Foundation) – project
number 505496137.
\printbibliography
\end{document}